\documentclass[a4paper]{article}

\usepackage{amsmath}
\usepackage{amssymb}
\usepackage{amsthm}
\usepackage{alg}
\usepackage[T1]{fontenc}

\newcommand{\MaxkCSP}{\textsc{Max $k$-CSP}}
\newcommand{\MaxkCSPq}{\textsc{Max $k$-CSP$_{q}$}}
\newcommand{\MaxCSP}[1]{\textsc{Max CSP}(#1)}

\newcommand{\R}{\mathbb{R}}
\newcommand{\F}{\mathbb{F}}
\newcommand{\Z}{\mathbb{Z}}
\newcommand{\Proj}{\mathbb{P}}
\newcommand{\scalprod}[1]{\left<  #1 \right> }

\DeclareMathOperator*{\E}{\mathbb{E}}
\DeclareMathOperator*{\Var}{Var}
\DeclareMathOperator*{\Cov}{Cov}
\DeclareMathOperator{\Inf}{Inf}
\DeclareMathOperator{\Support}{Supp}
\DeclareMathOperator{\Opt}{Opt}
\DeclareMathOperator{\Ordo}{\mathcal{O}}

\newcommand{\verifier}{\mathcal{V}}

\theoremstyle{plain}
\newtheorem{theorem}{Theorem}[section]
\newtheorem{lemma}[theorem]{Lemma}
\newtheorem{corollary}[theorem]{Corollary}
\newtheorem{conjecture}[theorem]{Conjecture}
\newtheorem{proposition}[theorem]{Proposition}
\theoremstyle{definition}
\newtheorem{definition}[theorem]{Definition}

\begin{document}


  \title{Approximation Resistant Predicates From Pairwise Independence}
  \date{6 Dec 2007}

  \author{
    Per Austrin\thanks{
      E-mail: \texttt{austrin@kth.se}. Research funded by Swedish Research Council Project Number 50394001.
    }\\
    KTH -- Royal Institute of Technology\\
    Stockholm, Sweden
    \and
    Elchanan Mossel\thanks{
      E-mail: \texttt{mossel@stat.berkeley.edu}. 
Research supported by BSF grant 2004105, NSF CAREER award DMS 0548249 
and DOD ONR grant N0014-07-1-05-06} \\ 
U.C. Berkeley\\
USA}
   
  \maketitle

  \begin{abstract}
    We study the approximability of predicates on $k$ variables from a
    domain $[q]$, and give a new sufficient condition for such
    predicates to be approximation resistant under the Unique Games
    Conjecture.  Specifically, we show that a predicate $P$ is
    approximation resistant if there exists a balanced pairwise
    independent distribution over $[q]^k$ whose support is contained
    in the set of satisfying assignments to $P$.

    Using constructions of pairwise indepenent distributions this result 
    implies that
    \begin{itemize}
    \item
    For general $k \ge 3$ and $q \ge 2$, the
    \MaxkCSPq{} problem is UG-hard to approximate within $q^{\lceil
    \log_2 k +1 \rceil - k} + \epsilon$.  
    \item
    For $k \geq 3$ and $q$ prime power, the hardness ratio is improved 
    to $kq(q-1)/q^k + \epsilon$.  
    \item
    For the special case of $q = 2$, i.e., boolean variables, we can sharpen
    this bound to $(k + \Ordo(k^{0.525}))/2^k + \epsilon$, improving
    upon the best previous bound of $2k/{2^k} + \epsilon$
    (Samorodnitsky and Trevisan, STOC'06) by essentially a factor $2$.
    \item
    Finally, for $q=2$, assuming that the famous Hadamard Conjecture is true, 
    this can
    be improved even further, and the $\Ordo(k^{0.525})$ term can be
    replaced by the constant $4$.
\end{itemize}
  \end{abstract}

  \section{Introduction}

  In the \MaxkCSP{} problem, we are given a set of constraints over a
  set of boolean variables, each constraint being a boolean function
  acting on at most $k$ of the variables.  The objective is to find an
  assignment to the variables satisfying as many of the constraints as
  possible.  This problem is NP-hard for any $k \ge 2$, and as a
  consequence, a lot of research has been focused on studying how well
  the problem can be approximated.  We say that a (randomized)
  algorithm has {\em approximation ratio} $\alpha$ if, for all
  instances, the algorithm is guaranteed to find an assignment which
  (in expectation) satisfies at least $\alpha \cdot \Opt$ of the
  constraints, where $\Opt$ is the maximum number of simultaneously
  satisfied constraints, over any assignment.

  A particularly simple approximation algorithm is the algorithm which
  simply picks a random assignment to the variables.  This algorithm
  has a ratio of $1/2^k$.  It was first improved by Trevisan
  \cite{trevisan98parallel} who gave an algorithm with ratio $2/2^k$
  for \MaxkCSP{}.  Recently, Hast \cite{hast05approximating} gave an
  algorithm with ratio $\Omega(k/(\log k 2^k))$, which was
  subsequently improved by Charikar et
  al.~\cite{charikar06approximation} who gave an algorithm with
  approximation ratio $c \cdot k/2^k$, where $c > 0.44$ is an absolute
  constant.

  The PCP Theorem implies that the \MaxkCSP{} problem is NP-hard to
  approximate within $1/c^k$ for some constant $c > 1$.  Samorodnitsky
  and Trevisan \cite{samorodnitsky00pcp} improved this hardness to
  $2^{2\sqrt{k}}/2^k$, and this was further improved to
  $2^{\sqrt{2k}}/2^k$ by Engebretsen and Holmerin
  \cite{engebretsen05efficient}.  Finally, Samorodnitsky and Trevisan
  \cite{samorodnitsky05gowers} proved that, if the Unique Games
  Conjecture \cite{khot02power} is true, then the 
  \MaxkCSP{} problem is hard to
  approximate within $2k/2^k$.  To be more precise, the hardness they
  obtained was $2^{\lceil \log_2 k+1 \rceil}/2^k$, which is
  $(k+1)/2^k$ for $k = 2^r-1$, but can be as large as $2k/2^k$ for
  general $k$.  Thus, the current gap between hardness and
  approximability is a small constant factor of $2/0.44$.

  For a predicate $P: \{0,1\}^k \rightarrow \{0,1\}$, the $\MaxCSP{P}$
  problem is the special case of \MaxkCSP{} in which all constraints
  are of the form $P(l_1, \ldots, l_k)$, where each literal $l_i$ is
  either a variable or a negated variable.  For this problem, the
  random assignment algorithm achieves a ratio of $m/2^k$, where $m$
  is the number of satisfying assignments of $P$.  Surprisingly, it
  turns out that for certain choices of $P$, this is the best possible
  algorithm.  In a celebrated result, Håstad \cite{håstad01optimal}
  showed that for $P(x_1, x_2, x_3) = x_1 \oplus x_2 \oplus x_3$, the
  \MaxCSP{P} problem is hard to approximate within $1/2+\epsilon$.

  Predicates $P$ for which it is hard to approximate the $\MaxCSP{P}$
  problem better than a random assignment, are called
  \emph{approximation resistant}.  A slightly stronger notion is that
  of \emph{hereditary} approximation resistance -- a predicate $P$ is
  hereditary approximation resistant if all predicates implied by $P$
  are approximation resistant.  A natural and important question is to
  understand the structure of approximation resistance.  For $k = 2$
  and $k = 3$, this question is resolved -- predicates on $2$
  variables are never approximation resistant, and a predicate on $3$
  variables is approximation resistant if and only if it is implied by
  an XOR of the three variables
  \cite{håstad01optimal,zwick98approximation}.  For $k = 4$, Hast
  \cite{hast05beating} managed to classify most of the predicates with
  respect to to approximation resistance, but for this case there does
  not appear to be as nice a characterization as there is in the case
  $k = 3$.  It turns out that, assuming the Unique Games Conjecture,
  most predicates are in fact hereditary approximation resistant -- as
  $k$ grows, the fraction of such predicates tend to $1$
  \cite{håstad06approximation}.  Thus, instead of attempting to
  understand the seemingly complicated structure of approximation
  resistant predicates, one might try to understand the possibly
  easier structure of hereditary approximation resistant predicates,
  as these constitute the vast majority of \emph{all} predicates.

  A natural approach for obtaining strong inapproximability for the
  \MaxkCSP{} problem is to search for approximation resistant
  predicates with very few accepting inputs.  This is indeed how all
  mentioned hardness results for \MaxkCSP{} come about (except the one
  implied by the PCP Theorem).

  It is natural to generalize the \MaxkCSP{} problem to variables over
  a domain of size $q$, rather than just boolean variables.  Without
  loss of generality we may assume that the domain is $[q]$.  We call
  this the \MaxkCSPq{} problem.  For \MaxkCSPq{}, the random
  assignment gives a $1/q^k$-approximation, and any
  $f(k)$-approximation algorithm for the \MaxkCSP{} problem gives a
  $f(k \lceil \log_2 q\rceil)$-approximation algorithm for the
  \MaxkCSPq{} problem.  Thus, Charikar et al.'s algorithm gives a
  $0.44 k \log_2 q/q^k$-approximation in the case that $q$ is a power
  of $2$.  The best previous inapproximability for the \MaxkCSPq{}
  problem is due to Engebretsen
  \cite{engebretsen04nonapproximability}, who showed that the problem
  is NP-hard to approximate within $q^{\Ordo(\sqrt{k})}/q^k$.

  Similarly to $q = 2$, we can define the \MaxCSP{P} problem for $P:
  [q]^k \rightarrow \{0,1\}$.  Here, there are several natural ways of
  generalizing the notion of a literal.  One possible definition is to
  say that a literal $l$ is of the form $\pi(x_i)$, for some variable
  $x_i$ and permutation $\pi: [q] \rightarrow [q]$.  A stricter
  definition is to say that a literal is of the form $x_i + a$, where,
  again, $x_i$ is a variable, and $a \in [q]$ is some constant.  In
  this paper, we use the second, stricter, definition.  As this is a
  special case of the first definition, our hardness results apply
  also to the first definition.


  \subsection{Our contributions}
  Our main result is the following:

  \begin{theorem}
    \label{thm:main}
    Let $P: [q]^k \rightarrow \{0,1\}$ be a $k$-ary predicate over $[q]$, 
    and let $\mu$ be a distribution over $[q]^k$ such that 
    $$
    \Pr_{x \in ([q]^k,\mu)}[P(x)] = 1
    $$
    and for all $1 \leq i \neq j \leq k$ and all $a,b \in [q]$, it holds that 
    $$
    \Pr_{x \in ([q]^k,\mu)}[x_i = a, x_j = b] = 1/q^2.
    $$ 
    Then, for any $\epsilon > 0$, the
    UGC implies that the $\MaxCSP{P}$ problem is NP-hard to
    approximate within
    $$
    \frac{|P^{-1}(1)|}{q^k} + \epsilon,
    $$
    i.e., $P$ is hereditary approximation resistant.
  \end{theorem}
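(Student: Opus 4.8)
The plan is to follow the standard Unique-Games-to-CSP template. Starting from a Unique Games instance on vertex set $V$, label set $[L]$ and permutation constraints $\pi_e$ (which the UGC makes hard to distinguish ``value $\ge 1-\eta$'' from ``value $\le \gamma$''), I would build an instance of $\MaxCSP{P}$ whose variables are the table entries of a supposed long code $g_v\colon [q]^L \to [q]$ for each $v$, assumed folded over the diagonal $\Z_q$-action, $g_v(y+c\mathbf 1)=g_v(y)+c$; folding is what lets the literals ``$x_i+c$'' be implemented and, as we will see, kills the single-coordinate Fourier mass of each $g_v$. The constraint distribution is the dictatorship test attached to $\mu$: pick a random vertex and $k$ random incident edges, draw $L$ i.i.d.\ samples from the perturbed distribution $\mu_\epsilon := (1-\epsilon)\mu + \epsilon\cdot\mathrm{Unif}([q]^k)$ to obtain $k$ correlated points of $[q]^L$, feed the $i$-th point — after applying the permutation of the $i$-th chosen edge — into $g$ of that edge's other endpoint, and accept iff $P$ holds on the $k$ returned symbols. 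Completeness is immediate: given a labeling satisfying a $(1-\eta)$ fraction of edges, taking each $g_v$ to be the folded dictator of its label makes the $k$ queried symbols distributed exactly as $\mu_\epsilon$ (read off a common coordinate) whenever the sampled edges are all satisfied, so the test accepts with probability at least $1-\eta-\epsilon$, since the $\mu$ part of $\mu_\epsilon$ always satisfies $P$. Thus the perturbation costs only $O(\epsilon)$ in completeness, while — crucially — giving $\mu_\epsilon$ full support without disturbing its uniform marginals or its pairwise independence.

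For soundness I would argue the contrapositive: that acceptance probability exceeding $|P^{-1}(1)|/q^k+\delta$ forces some participating $g_v$ to have an influential coordinate. Fixing an orthonormal basis $\phi_0\equiv 1,\phi_1,\dots,\phi_{q-1}$ of $L_2([q])$ and expanding $P$ in the induced product basis gives
\[
P(y_1,\dots,y_k)=\frac{|P^{-1}(1)|}{q^k}+\sum_{\emptyset\neq S\subseteq[k]}\ \sum_{a_i\neq 0,\,i\in S}\widehat P(S,\vec a)\prod_{i\in S}\phi_{a_i}(y_i),
\]
so $\Pr[\text{accept}]$ equals $|P^{-1}(1)|/q^k$ plus at most $q^k$ correlation terms, the $(S,\vec a)$ term being $\widehat P(S,\vec a)$ times the expectation under $\mu_\epsilon^{\otimes L}$ of $\prod_{i\in S}\phi_{a_i}(z_i)$, where $z_i$ denotes the $i$-th returned symbol. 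It therefore suffices to show that each such correlation is negligible unless some $g_v$ entering it has a large influence.

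This last estimate is the heart of the argument and is exactly where pairwise independence is used. I would invoke the invariance principle for low-influence functions over a correlated product space: because $\mu_\epsilon$ has full support its maximal correlation is bounded below $1$, so the correlation $\E_{\mu_\epsilon^{\otimes L}}\big[\prod_{i\in S}F_i\big]$ (with $F_i$ the composition of $\phi_{a_i}$ with $g$ of the $i$-th endpoint after permutation) differs, by at most an error tending to $0$ with the largest low-degree influence of the $F_i$, from the analogous expectation in which each $F_i$ is replaced by its multilinear expansion evaluated at jointly Gaussian variables matching the covariance structure of $\mu_\epsilon$ — high degrees being damped precisely by the $\epsilon$-noise built into $\mu_\epsilon$. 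The decisive observation is that for $i\neq j$ the second moment of $\phi_{a_i}$ at coordinate $i$ against $\phi_{a_j}$ at coordinate $j$ under $\mu_\epsilon$ is $\E[\phi_{a_i}]\E[\phi_{a_j}]=0$ by pairwise independence; hence in the Gaussian model the variables attached to distinct coordinates $i$ are pairwise uncorrelated, so jointly independent, and the Gaussian expectation factors into $\prod_{i\in S}\E_{\mathrm{Unif}([q]^L)}[\phi_{a_i}(g_{v_i})]$, each factor of which vanishes by folding (averaging $g_v(y)+c$ over $c$ makes the argument uniform, and $\phi_{a_i}$ has mean zero). So all nonconstant terms are negligible when every $g_v$ has small low-degree influences, giving $\Pr[\text{accept}]\le|P^{-1}(1)|/q^k+\delta$; contrapositively, if acceptance exceeds this, then some participating $g_v$ has a coordinate whose degree-$d$ influence is bounded below by a quantity depending only on $\delta,q,k,\epsilon$, and the usual decoding — collect these few coordinates of each $g_v$ into a candidate set and label $v$ by a uniformly random member — satisfies a constant fraction of the Unique Games edges, contradicting the UGC once $\eta,\epsilon,\delta$ are small enough. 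I expect the main obstacle to be the quantitative control in this correlation estimate: keeping the Gaussian-reduction error small (which needs the high-degree influences of the composed functions $\phi_{a_i}(g_{v_i}\circ\pi_e)$ bounded in terms of those of $g_{v_i}$, via the $\epsilon$-noise), together with the purely bookkeeping — but fiddly — check that folding plus the literal constants make the completeness tuple distributed exactly as $\mu_\epsilon$. Hereditary resistance is then automatic: if $P'$ is implied by $P$, then $\mu$ is still supported on $P'^{-1}(1)$ and still balanced and pairwise independent, so the same reduction applies to $P'$ verbatim.
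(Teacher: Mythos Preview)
Your proposal is correct and follows the same high-level template as the paper (Unique Games plus a folded long-code test whose query distribution is the noised $\mu$), but the execution differs in two respects worth noting.

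First, the paper works with a $k$-uniform \emph{hypergraph} formulation of Unique Games: the verifier picks a single hyperedge $(v_1,\dots,v_k)$ with permutations $\pi_1,\dots,\pi_k$, and the soundness argument must produce at least three vertices of that edge sharing an influential coordinate (this is the $(3,k)$-UGC the paper isolates). You instead use the standard bipartite formulation, sampling a center $u$ and $k$ independent neighbors; after averaging each long code over the neighbor distribution, your soundness only needs a single influential coordinate of the averaged function, which then propagates to a positive fraction of neighbors by convexity of influence. Both work, but your version stays closer to the usual UGC while the paper's buys a weaker hypothesis.

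Second, and more substantively, for the soundness estimate the paper does \emph{not} Fourier-expand $P$. It writes the acceptance probability as
\[
\sum_{x\in P^{-1}(1)}\E\Big[\prod_{i=1}^k g_{v_i,x_i}\Big],
\]
using the $\{0,1\}$-valued indicators $g_{v,a}=\mathbf 1[f_v=a]$, and then applies as a black box Mossel's theorem that for a $t$-wise independent, full-support distribution one has $\E[\prod_i f_i]\approx\prod_i\E[f_i]$ unless $t{+}1$ of the $f_i$ share a low-degree influential coordinate. Since folding forces $\E[g_{v,a}]=1/q$, each summand is $q^{-k}+o(1)$ and the total is $|P^{-1}(1)|/q^k+o(1)$. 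Your route---expand $P$ over the product basis, invoke the Gaussian invariance principle, use pairwise independence to decouple the Gaussian blocks so the product factors into mean-zero pieces---is essentially a sketch of how that black-box theorem is \emph{proved}. So you are re-deriving inline what the paper quotes. The paper's argument is shorter and cleaner; yours makes more transparent exactly where pairwise independence (as opposed to some other property of $\mu$) is doing the work.
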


  Using constructions of pairwise independent distributions, we obtain 
  the following corollaries:

  \begin{theorem}
    \label{thm:hardness_general}
    For any $k \ge 3$, $q \ge 2$, and $\epsilon > 0$, it is UG-hard to
    approximate the \MaxkCSPq{} problem within
    $$
    \frac{q^{\lceil \log_2 k+1 \rceil}}{q^k} + \epsilon <
    \frac{k^{\log_2 q} \cdot q}{q^k} + \epsilon.
    $$
    In the special case that $k = 2^r-1$ for some $r$ the hardness
    ratio improves to
    $$\frac{k^{\log_2 q}}{q^k} + \epsilon.$$
  \end{theorem}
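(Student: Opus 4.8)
The plan is to deduce this from Theorem~\ref{thm:main}. For each $k$ and $q$ I would produce a single $k$-ary predicate $P$ over $[q]$ that has few satisfying assignments and admits a balanced pairwise independent distribution supported on $P^{-1}(1)$; since every instance of \MaxCSP{P} is in particular an instance of \MaxkCSPq, the inapproximability that Theorem~\ref{thm:main} gives for \MaxCSP{P} applies to \MaxkCSPq as well. Writing $m = \lceil\log_2(k+1)\rceil$, the goal is a balanced pairwise independent distribution $\mu$ on $[q]^k$ with $|\Support(\mu)| \le q^m$; taking $P$ with $P^{-1}(1) = \Support(\mu)$ then yields hardness $|\Support(\mu)|/q^k + \epsilon \le q^{\lceil\log_2(k+1)\rceil}/q^k + \epsilon$.

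To build $\mu$ I would use a linear construction, identifying $[q]$ with $\Z_q$. Since $2^m - 1 \ge k$, fix $k$ distinct nonzero vectors $v^{(1)},\dots,v^{(k)} \in \{0,1\}^m$, draw $Y_1,\dots,Y_m$ independently and uniformly from $\Z_q$, and declare the $i$-th coordinate of a sample to be $X_i = \sum_{j=1}^m v^{(i)}_j Y_j \bmod q$. Then $\mu$, the law of $(X_1,\dots,X_k)$, is a function of $m$ uniform $\Z_q$-values and hence supported on at most $q^m$ points, and each $X_i$ is uniform on $[q]$ because $v^{(i)}$ has a coordinate equal to $1$. The step needing the most care — the only genuinely non-routine one — is pairwise independence when $q$ is not a prime power, so that the full-rank-over-a-field argument is unavailable; I would handle it directly. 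Given $i \ne j$, pick a column $c$ where $v^{(i)}$ and $v^{(j)}$ disagree, say $v^{(i)}_c = 1$ and $v^{(j)}_c = 0$ (otherwise swap $i$ and $j$). Conditioned on $(Y_{c'})_{c'\neq c}$, the value $X_j$ is already determined, while $X_i$ equals $Y_c$ plus a quantity not involving $Y_c$ and is therefore uniform on $[q]$; since $X_j$ is moreover uniform unconditionally, $(X_i,X_j)$ is uniform on $[q]^2$. Hence $\mu$ is balanced and pairwise independent, meeting the hypotheses of Theorem~\ref{thm:main}.

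Finally I would read off the bounds. Theorem~\ref{thm:main} applied to this $P$ shows $P$ is hereditary approximation resistant and that \MaxCSP{P}, hence \MaxkCSPq, is UG-hard to approximate within $q^{\lceil\log_2(k+1)\rceil}/q^k + \epsilon$, which is the first displayed quantity. The second follows from a one-line estimate: the largest power of $2$ strictly below $k+1$ is at most $k$, so $\lceil\log_2(k+1)\rceil \le 1+\log_2 k$ and thus $q^{\lceil\log_2(k+1)\rceil} \le q\,k^{\log_2 q}$. For the special case $k = 2^r-1$ one may take $m = r$, since then there are exactly $2^r-1 = k$ nonzero vectors in $\{0,1\}^r$; this makes $|\Support(\mu)| \le q^r$ and sharpens the ratio to $q^r/q^k + \epsilon = (k+1)^{\log_2 q}/q^k + \epsilon$, as claimed. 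The only part with real content is the pairwise-independence check for composite $q$; the rest is the construction's bookkeeping together with this elementary arithmetic.
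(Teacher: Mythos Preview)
Your approach is essentially identical to the paper's: the same linear construction over $\Z_q$ using the distinct nonzero $\{0,1\}$-vectors in $\{0,1\}^m$ as generators, followed by an appeal to the main theorem; the paper merely packages your direct pairwise-independence check into a general ring lemma (Lemma~\ref{lemma:independence}), whose proof is exactly your conditioning argument. One minor caveat: for $k=2^r-1$ your computation correctly yields $(k+1)^{\log_2 q}/q^k+\epsilon$, not the stated $k^{\log_2 q}/q^k+\epsilon$ --- this slight arithmetic slip is present in the paper's own proof as well.
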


  This already constitutes a significant improvement upon the
  $q^{\Ordo(\sqrt{k})}/q^k$-hardness of Engebretsen, and in the case that $q$
  is a prime power we can improve this even further.

  \begin{theorem}
    \label{thm:hardness_primepower}
    For any $k \ge 3$, $q = p^e$ for some prime $p$, and $\epsilon >
    0$, it is UG-hard to approximate the \MaxkCSPq{} problem within
    $$
    \frac{k(q-1)q}{q^k} + \epsilon.
    $$
    In the special case that $k = (q^r-1)/(q-1)$ for some $r$, the hardness
    ratio improves to
    $$\frac{k(q-1)+1}{q^k} + \epsilon \le \frac{kq}{q^k} + \epsilon.$$
  \end{theorem}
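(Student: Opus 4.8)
The plan is to derive Theorem~\ref{thm:hardness_primepower} from Theorem~\ref{thm:main} by exhibiting, for every $k \ge 3$, a predicate $P : [q]^k \to \{0,1\}$ with few satisfying assignments that carries a balanced pairwise independent distribution on its support. Since $q = p^e$ is a prime power, the field $\F_q$ is available, and I would take $P$ to be the indicator $\mathbf{1}_C$ of a linear code $C \subseteq \F_q^k$. As $\MaxCSP{P}$ is a special case of $\MaxkCSPq{}$ (each constraint $P(x_{i_1}+a_1,\dots,x_{i_k}+a_k)$ is a particular $k$-ary constraint over $[q]$), any inapproximability factor $|P^{-1}(1)|/q^k + \epsilon = |C|/q^k + \epsilon$ obtained for $\MaxCSP{P}$ via Theorem~\ref{thm:main} transfers verbatim to $\MaxkCSPq{}$. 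So the whole theorem reduces to building a small enough $C$.

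The construction rests on a standard observation. Let $G$ be an $m\times k$ matrix over $\F_q$ with columns $g_1,\dots,g_k \in \F_q^m$, let $C$ be its row space, and draw $x = yG$ with $y$ uniform in $\F_q^m$. Then $x_i = y\cdot g_i$, so for $i\neq j$ the pair $(x_i,x_j)$ is uniform on $\F_q^2$ exactly when $g_i$ and $g_j$ are linearly independent over $\F_q$. Consequently the uniform distribution on $C$ is balanced and pairwise independent if and only if the $g_i$ are nonzero and pairwise non-proportional, i.e.\ they represent $k$ distinct points of the projective space $\Proj^{m-1}(\F_q)$. Such a $G$ exists precisely when $k \le (q^m-1)/(q-1)$, the number of points of $\Proj^{m-1}(\F_q)$. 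Taking $m$ minimal with this property, one has $m \le k$ (because $(q^k-1)/(q-1) = 1+q+\dots+q^{k-1} \ge k$), so $G$ can be chosen of full row rank — e.g.\ so that the standard basis vectors $e_1,\dots,e_m$ are among its columns — whence $|C| = q^m$. With $P := \mathbf{1}_C$, Theorem~\ref{thm:main} gives that $\MaxkCSPq{}$ is UG-hard to approximate within $q^m/q^k + \epsilon$.

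It then remains to estimate $q^m$. Minimality of $m$ gives $(q^{m-1}-1)/(q-1) < k$, hence $q^{m-1}-1 < k(q-1)$, and since both sides are integers, $q^{m-1} \le k(q-1)$, so $q^m \le k(q-1)q$; this is the general bound. When $k = (q^r-1)/(q-1)$ for some $r$, the minimal choice is $m = r$ and the resulting $C$ is exactly the $q$-ary simplex code, of dimension $r$; then $q^m = q^r = k(q-1)+1$, which is at most $kq$ since $k \ge 1$, yielding the sharpened bound.

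Given Theorem~\ref{thm:main}, I do not anticipate a serious obstacle: the one genuinely essential ingredient is that $q$ be a prime power, so that $\F_q$ and the projective-space counting are at hand (for arbitrary $q$ one must instead use the weaker product/bit-vector construction behind Theorem~\ref{thm:hardness_general}). The remaining details — that the minimal $m$ satisfies $k \ge m$ so $G$ can be taken full-rank, that the simplex code indeed has dimension $r$ in the special case, and the elementary inequality $q^{m-1}\le k(q-1)$ — are routine bookkeeping.
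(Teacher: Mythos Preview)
Your proposal is correct and is essentially the same argument as the paper's: both choose $k$ pairwise non-proportional vectors in $\F_q^m$ (equivalently, $k$ distinct points of $\Proj^{m-1}(\F_q)$), take the image of a uniform $y\in\F_q^m$ under the resulting linear map, and observe that this yields a balanced pairwise independent distribution of support size $q^m$; your $m$ coincides with the paper's $r=\lceil\log_q(k(q-1)+1)\rceil$, and your estimate $q^{m-1}\le k(q-1)$ is exactly the ``lose a factor $q$'' step in the paper. The only cosmetic differences are that you phrase the construction as a linear code with generator matrix $G$ and invoke Theorem~\ref{thm:main} directly, whereas the paper routes through Lemma~\ref{lemma:independence} and Corollary~\ref{corollary:pairwise_hardness}, and that you are a bit more explicit about ensuring full row rank (via $m\le k$ and including the standard basis among the columns).
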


  Neither of these two theorems improve upon the results of
  \cite{samorodnitsky05gowers} for the case of $q = 2$.  However, the
  following theorem does.

  \begin{theorem}
    \label{thm:hardness_boolean_variables}
    For any $k \ge 3$ and $\epsilon > 0$, it is UG-hard to approximate
    the \MaxkCSP{} problem within
    $$
    \frac{k + \Ordo(k^{0.525})}{2^k} + \epsilon.
    $$
    If the Hadamard Conjecture is true, it is UG-hard to approximate
    the \MaxkCSP{} problem within
    $$ \frac{4\lceil (k+1)/4 \rceil}{2^k} + \epsilon \le
    \frac{k+4}{2^k} + \epsilon
    $$
  \end{theorem}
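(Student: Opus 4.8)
The plan is to derive both parts of the theorem from Theorem~\ref{thm:main} by exhibiting, for every $k$, a predicate $P\colon\{0,1\}^k\to\{0,1\}$ that supports a balanced pairwise independent distribution and whose number of satisfying assignments is as small as possible. Once such a $P$ with $|P^{-1}(1)|\le k+\Ordo(k^{0.525})$ (respectively $|P^{-1}(1)|\le 4\lceil(k+1)/4\rceil$ under the Hadamard Conjecture) is in hand, Theorem~\ref{thm:main} directly yields the claimed inapproximability ratio $|P^{-1}(1)|/2^k+\epsilon$.

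The source of small-support balanced pairwise independent distributions over the boolean cube will be Hadamard matrices. Given a Hadamard matrix $H$ of order $n$, I would normalize it so that its first row and column are all $+1$, delete the first column, and apply the map $+1\mapsto 0$, $-1\mapsto 1$ to obtain an $n\times(n-1)$ matrix over $\{0,1\}$. The standard fact to verify is that the uniform distribution over the $n$ rows of this matrix is a balanced pairwise independent distribution over $\{0,1\}^{n-1}$: orthogonality of each non-first column to the all-ones column gives that each marginal is uniform, and orthogonality of two distinct non-first columns, combined with this balance, forces each of the four patterns in $\{0,1\}^2$ to occur in exactly $n/4$ rows. Restricting to any $k\le n-1$ of the coordinates preserves balanced pairwise independence and gives a distribution supported on at most $n$ points; taking $P$ to be the indicator of this support yields $|P^{-1}(1)|\le n$ together with the hypotheses of Theorem~\ref{thm:main}. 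Thus the task reduces to finding the smallest order $n\ge k+1$ of a Hadamard matrix known to exist.

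Under the Hadamard Conjecture every multiple of $4$ is such an order, so one may take $n=4\lceil(k+1)/4\rceil\le k+4$, giving the conditional bound. For the unconditional bound I would use the classical Paley and Sylvester constructions: for an odd prime $p\equiv 3\pmod 4$ there is a Hadamard matrix of order $p+1$ (Paley type~I), for an odd prime $p\equiv 1\pmod 4$ one of order $2(p+1)$ (Paley type~II), and the Kronecker product with the order-$2$ matrix doubles any existing order; hence $2(p+1)$ is a Hadamard order for \emph{every} odd prime $p$, and $2(p+1)\equiv 0\pmod 4$. Letting $p$ be the least (odd) prime with $p\ge\lceil(k-1)/2\rceil$ and setting $n=2(p+1)$, one gets $n\ge k+1$ with $n$ a valid Hadamard order; the Baker--Harman--Pintz bound on prime gaps, $p_{m+1}-p_m=\Ordo(p_m^{0.525})$, gives $p\le\lceil(k-1)/2\rceil+\Ordo(k^{0.525})$ and hence $n\le k+1+\Ordo(k^{0.525})$. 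The finitely many small $k$ for which $\lceil(k-1)/2\rceil<3$ are absorbed into the $\Ordo(\cdot)$ term. Plugging $|P^{-1}(1)|\le n$ into Theorem~\ref{thm:main} finishes the proof.

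The step I expect to need the most care is this last number-theoretic argument: the Paley constructions come with a congruence constraint ($p\equiv 3$ versus $p\equiv 1\pmod 4$) that a generic prime supplied by a prime-gap theorem need not satisfy, and one can only invoke prime-gap bounds for all primes rather than for primes in a fixed residue class modulo $4$; this is precisely why the construction is arranged around $2(p+1)$ --- valid for every odd prime --- at the cost of a factor $2$ in the target that is compensated by halving the point at which the prime-gap bound is applied. One should also double-check that restricting to $k$ coordinates cannot accidentally spoil pairwise independence or make $|P^{-1}(1)|$ exceed the number of rows, and that the $+\epsilon$ slack in Theorem~\ref{thm:main} absorbs the rounding in both the conditional and the unconditional bound.
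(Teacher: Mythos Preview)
Your proposal is correct and follows essentially the same route as the paper: reduce via the main theorem to constructing a balanced pairwise independent distribution over $\{0,1\}^k$ with small support, obtain such a distribution from the rows of a (column-restricted, normalized) Hadamard matrix of order $n\ge k+1$, and then bound the least such $n$ by $4\lceil(k+1)/4\rceil$ under the Hadamard Conjecture and by $2(p+1)$ for the least odd prime $p\ge (k-1)/2$ unconditionally, invoking Baker--Harman--Pintz for the $\Ordo(k^{0.525})$ gap. The paper packages this as ``$h(k+1)/2^k+\epsilon$'' together with a separate corollary $h(n)\le n+\Ordo(n^{0.525})$, and cites Paley's theorem in the form $n=2^e(p^f+1)$ rather than splitting into Paley~I/II plus Sylvester doubling, but the content is identical.
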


  Thus, we improve the hardness of \cite{samorodnitsky05gowers} by
  essentially a factor $2$, decreasing the gap to the best algorithm
  from roughly $2/0.44$ to roughly $1/0.44$.

  \subsection{Related work}
  
  It is interesting to compare our results to the results of
  Samordnitsky and Trevisan~\cite{samorodnitsky05gowers}.  Recall that
  using the Gowers norm, \cite{samorodnitsky05gowers} prove that the
  \MaxkCSP{} problem has a hardness factor of $2^{\lceil \log_2 k+1
  \rceil}/2^k$, which is $(k+1)/2^k$ for $k = 2^r-1$, but can be as
  large as $2k/2^k$ for general $k$.

  Our proof uses the same version of the UGC, but the analysis is 
  more direct and more general. The proof of~\cite{samorodnitsky05gowers}   
  requires us to work specifically with a linearity 
  hyper-graph test for the long codes. For this test, the success probability 
  is shown to be closely related to the Gowers inner product of the long codes. 
  In particular, in the soundness analysis it is shown 
  that if the value of this test is too large, it follows that the Gowers norm 
  is larger than for ``random functions''. 
  From this it is shown that at least two of the functions have large 
  influences which in turns allows us to obtain a good solution for the UGC. 

  Our construction on the other hand allows any pairwise distribution to
  define a long-code test. Using~\cite{mossel07noise}  
  we show that if a collection of supposed long codes 
  does better than random for this long code test, then at least two of them 
  have large influences. 

  Our proof has a number of advantages: first it applies to any
  pairwise independent distribution. This should be compared
  to~\cite{samorodnitsky05gowers} that require us to work specifically
  with the hyper-graph linearity test.  In particular our results
  allow us to obtain hardness results for $\MaxCSP{P}$ for a wide
  range of $P$'s.  The results are general enough to accomodate any
  domain $[q]$ (it is not clear if the results of
  \cite{samorodnitsky05gowers} extend to larger domains), and we are
  also able to obtain a better hardness factor for most values of $k$
  even in the $q = 2$ case.

  Also, our proof uses bounds on expectations of products under
  certain types of correlation, putting it in the same general
  framework as many other UGC-based hardness results, in particular
  those for $2$-CSPs
  \cite{khot07optimal,khot06sdp,austrin07balanced,austrin07towards,odonnell07optimal}.

  Finally, our proof gives parametrized hardness in the following
  sense.  We give a family of hardness assumptions, called the
  $(t,k)$-UGC. All of these assumptions follow from the UGC, and in
  particular the case $t = 2$ is known to be equivalent to the UGC.
  However, the $(t,k)$-UGC assumption is weaker for larger values of
  $t$. For each value of $t$ our results imply a different hardness of
  approximation factor.  Specifically, if the $(t,k)$-UGC is true for
  some $t \ge 3$, then the \MaxkCSP{} problem is NP-hard to
  approximate within $\Ordo\left(k^{\lceil t/2 \rceil -
  1}/2^k\right)$.  Thus, even the $(4,k)$-UGC gives a hardness of
  $\Ordo(k/2^k)$, and for $t < \sqrt{k}/\log k$, the $(t,k)$-UGC gives
  a hardness better than the best unconditional result known
  \cite{engebretsen05efficient}.

  \section{Definitions}


  \subsection{Unique Games}

  We use the following formulation of the Unique Label Cover Problem:
  given is a $k$-uniform hypergraph, where for each edge $(v_1,
  \ldots, v_k)$ there are $k$ permutations $\pi_1, \ldots, \pi_k$ on
  $[L]$.  We say that an edge $(v_1, \ldots, v_k)$ with permutations
  $\pi_1, \ldots, \pi_k$ is $t$-wise satisfied by a labelling $\ell: V
  \rightarrow [L]$ if there are $i_1 < i_2 < \ldots < i_t$ such that
  $\pi_{i_1}(\ell(v_{i_1})) = \pi_{i_2}(\ell(v_{i_2})) = \ldots =
  \pi_{i_t}(\ell(v_{i_t}))$.  We say that an edge is completely
  satisfied by a labelling if it is $k$-wise satisfied.
  
  We denote by $\Opt_t(X) \in [0,1]$ the maximum fraction of $t$-wise
  satisfied edges, over any labelling.  Note that $\Opt_{t+1}(X) \le
  \Opt_{t}(X)$.




  The following conjecture is known to follow 
  from the Unique Games Conjecture (see details below).

  \begin{conjecture}
    For any $2 \le t \le k$, and $\delta > 0$, there exists an $L > 0$
    such that it is NP-hard to distinguish between $k$-ary Unique
    Label Cover instances $X$ with label set $[L]$ with $\Opt_k(X) \ge
    1-\delta$, and $\Opt_t(X) \le \delta$.
  \end{conjecture}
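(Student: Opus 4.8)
The plan is to derive the conjecture from the ordinary Unique Games Conjecture by the standard ``hypergraphization'' reduction used in this line of work \cite{samorodnitsky05gowers}. Recall that the UGC asserts: for all $\delta',\gamma>0$ there is an $L$ such that, given a bipartite Unique Label Cover instance $\mathcal{U}=(A,B,E)$ with label set $[L]$ and a bijection $\pi_{ab}\colon[L]\to[L]$ on each edge $(a,b)$ (the edge being satisfied by $\ell$ iff $\ell(a)=\pi_{ab}(\ell(b))$), it is NP-hard to distinguish $\Opt(\mathcal{U})\ge 1-\delta'$ from $\Opt(\mathcal{U})\le\gamma$. From such a $\mathcal{U}$ I would build a $k$-uniform instance $X$ on vertex set $A$: a random hyperedge is obtained by drawing $b\in B$ proportionally to its degree and then $k$ i.i.d.\ uniformly random neighbours $a_1,\dots,a_k$ of $b$, equipped with the permutations $\pi_i:=\pi_{a_i b}^{-1}$. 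For fixed $k$ this is a polynomial-time reduction.

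For completeness, if $\ell$ satisfies a $(1-\delta/k)$-fraction of the edges of $\mathcal{U}$, keep it as a labelling of $A$; since each pair $(a_i,b)$ is marginally a uniformly random edge, a union bound over $i=1,\dots,k$ shows that with probability at least $1-\delta$ all of them are satisfied, in which case $\pi_i(\ell(a_i))=\ell(b)$ for every $i$ and the hyperedge is completely satisfied. Hence $\Opt_k(X)\ge 1-\delta$.

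For soundness I would argue the contrapositive. Given a labelling $\ell_A$ of $A$ that $t$-wise satisfies more than a $\delta$-fraction of the hyperedges, define for each $b\in B$ the distribution $p_b(c):=\Pr_{a\sim b}[\pi_{ab}^{-1}(\ell_A(a))=c]$ on $[L]$. A hyperedge drawn from $b$ is $t$-wise satisfied precisely when some $t$ of the i.i.d.\ values $\pi_{a_i b}^{-1}(\ell_A(a_i))$ agree, so a union bound over the $\binom{k}{t}$ potential colliding $t$-subsets, together with independence, gives
\[
\Pr_{\text{hyperedge from }b}[t\text{-wise satisfied}] \;\le\; \binom{k}{t}\sum_{c}p_b(c)^t \;\le\; \binom{k}{t}\bigl(\max_c p_b(c)\bigr)^{t-1},
\]
where the last step uses $t\ge 2$ and $\sum_c p_b(c)=1$. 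Averaging over $b$, at least a $\delta/2$-fraction of the $b$'s satisfy $\max_c p_b(c)\ge\beta:=(\delta/(2\binom{k}{t}))^{1/(t-1)}>0$; labelling each such $b$ by a maximizing $c$ (and the rest arbitrarily) yields a labelling of $\mathcal{U}$ whose value is $\E_b[p_b(\ell(b))]\ge \delta\beta/2$. Thus $\Opt_t(X)>\delta$ forces $\Opt(\mathcal{U})>\delta\beta/2$.

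Finally I would instantiate the UGC with $\delta'=\delta/k$ and $\gamma=\delta\beta/3$; since $\beta$ depends only on $\delta,k,t$ and not on $L$, the promised $L$ works, and the reduction maps the UGC promise onto the claimed promise for $X$. I expect the only delicate points to be bookkeeping of this sort — ensuring every parameter is fixed before $L$ is chosen — and the collision inequality above; in particular one should note that repeated neighbours $a_i=a_j$ cause no trouble, since the quantities $\pi_{a_i b}^{-1}(\ell_A(a_i))$ are genuine i.i.d.\ functions of the independent samples $a_i$, so the product bound $p_b(c)^t$ is exact. (The converse direction, that the $t=2$ case already implies the UGC, is straightforward and not needed here.)
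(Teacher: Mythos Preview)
The statement is labelled a \emph{Conjecture} in the paper, and the paper does not give its own proof; it merely notes that the $(2,2)$-UGC is Khot's original conjecture, that Khot--Regev \cite{khot03vertex} showed the $(2,2)$-UGC is equivalent to the $(2,k)$-UGC, and that the $(t,k)$-UGC for $t>2$ then follows from the $(2,k)$-UGC by the trivial monotonicity $\Opt_{t+1}(X)\le\Opt_t(X)$.

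Your proposal is a correct, self-contained derivation of the $(t,k)$-UGC from the standard bipartite UGC via the same hypergraphization reduction that underlies the Khot--Regev result, carried out directly for arbitrary $t$. The completeness and soundness analyses are fine (in the averaging step you are implicitly using that the per-$b$ acceptance probability is bounded by $1$, so a reverse Markov argument gives that at least a $\delta/2$-fraction of $b$'s have conditional $t$-wise satisfaction probability at least $\delta/2$, whence $\max_c p_b(c)\ge\beta$). The only difference from the route the paper sketches is that the paper first collapses to $t=2$ via monotonicity and then quotes Khot--Regev, whereas you do the collision argument for general $t$; this costs you the $(t-1)$th root in $\beta$ but is otherwise the same idea and of course yields the same conclusion.
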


  For particular values of $t$ and $k$ we will refer to the
  corresponding special case of the above conjecture as the
  $(t,k)$-{\em Unique Games Conjecture} (or the $(t,k)$-UGC).

  Khot's original formulation of the Unique Games Conjecture
  \cite{khot02power} is then exactly the $(2,2)$-UGC, and Khot and
  Regev \cite{khot03vertex} proved that this conjecture is equivalent
  to the $(2,k)$-UGC for all $k$, which is what Samorodnitsky and
  Trevisan \cite{samorodnitsky05gowers} used to obtain hardness for
  \MaxkCSP{}.

  In this paper, we mainly use the $(3,k)$-UGC to obtain our hardness
  results.  Clearly, since $\Opt_{t+1}(X) \le \Opt_{t}(X)$, the
  $(t,k)$-UGC implies the $(t+1,k)$-UGC, so our assumption is implied
  by the Unique Games Conjecture.  But whether the converse holds, or
  whether there is hope of proving this conjecture (or, say, the
  $(k,k)$-UGC for large $k$) without proving the Unique Games
  Conjecture, is not clear, and should be an interesting direction for
  future research.

  \subsection{Influences}
  It is well known (see e.g.~\cite{khot07optimal}) 
  that each function $f : [q]^n \to \R$ admits a unique  
  {\em Efron-Stein decomposition:} $f = \sum_{S \subseteq [n]} f_S$ where 
  \begin{itemize}
  \item The function $f_S$ depends on $x_S = (x_i : i \in S)$ only.
  \item For every $S' \not \subseteq S$, and every $y_{S'} \in [q]^{S'}$ it holds that 
    \[
    \E[f_S(x_S) | x_{S'} = y_{S'}] = 0.
    \]
  \end{itemize}
  For $m \leq n$ we write $f^{\leq m} = \sum_{S : |S| \leq m} f_S$ for the 
  $m$-degree expansion of $f$. 
  We now define  the 
  \emph{influence of the $i$th coordinate on $f$},
  denoted by $\Inf_i(f)$ by 
  \begin{equation} \label{eqn:influence}
  \Inf_i(f) = \E_x[\Var_{x_i}[f(x)]]. 
  \end{equation}
  We define the \emph{$m$-degree influence of the $i$th coordinate on $f$},
  denoted by $\Inf_i^{\leq m}(f)$ by $\Inf_i(f^{\leq m})$. 

  Recall that the influence $\Inf_i(f)$ measures how much the
  function $f$ depends on the $i$'th variable, while the low degree
  influences $\Inf_i^{\leq m}(f)$ measures this for the low part of
  the expansion of $f$. The later quantity is closely related to the
  influence of $f$ on ``slightly noisy inputs''.  
  
  An important property of low-degree influences is that
  $$
  \sum_{i=1}^{n} \Inf_{i}^{\le m}(f) \le m \Var[f],
  $$ 
  implying that the number of coordinates with large low-degree
  influence must be small.  In particular, if $f: [q]^n \rightarrow
  [0,1]$, then the the number of coordinates with low-degree influence
  at least $\tau$ is at most $\tau / m$.

  \subsection{Correlated Probability Spaces}

  We will be interested in probability distributions 
  supported in $P^{-1}(1) \subseteq [q]^k$. 
  It would be useful to follow~\cite{mossel07noise} 
  and view $[q]^k$ with such probability measure as a collection of 
  $k$ {\em correlated spaces} corresponding to the $k$ coordinates. 
  We proceed with formal definitions of two and $k$ correlated spaces. 


  \begin{definition}
    Let $(\Omega, \mu)$ be a probability space over a finite product
    space $\Omega = \Omega_1 \times \Omega_2$.  The \emph{correlation}
    between $\Omega_1$ and $\Omega_2$ (with respect to $\mu$) is
    $$
    \rho(\Omega_1, \Omega_2; \mu) = \sup \{ \, \Cov[f_1(x_1)f_2(x_2)] \,:\,f_i: \Omega_i \rightarrow \R, \Var[f_i(x_i)] = 1\,\},
    $$
    where $(x_1,x_2)$ is drawn from $(\Omega, \mu)$.
  \end{definition}

  \begin{definition}
    Let $(\Omega, \mu)$ be a probability space over a finite product
    space $\prod_{i=1}^k \Omega_i$, and let $\Omega_S = \prod_{i \in
    S} \Omega_i$.  The correlation of $\Omega_1, \ldots, \Omega_k$
    (with respect to $\mu$) is
    $$
    \rho(\Omega_1, \ldots, \Omega_k; \mu) = \max_{1 \le i \le k-1}
    \rho(\Omega_{\{1, \ldots, i\}}, \Omega_{\{i+1, \ldots, k\}}; \mu)
    $$
  \end{definition}
  
  Of particular interest to us is the case where correlated spaces 
  are defined by a measure that it $t$-wise independent. 

  \begin{definition}
    Let $(\Omega, \mu)$ be a probability space over a product space
    $\Omega = \prod_{i=1}^k \Omega_i$.  We say that $\mu$ is
    $t$-wise independent if, for any choice of $i_1 < i_2 < \ldots <
    i_t$ and $b_1, \ldots, b_t$ with $b_j \in \Omega_{i_j}$, we have
    that
    $$ \Pr_{w \in (\Omega, \mu)}[w_{i_1} = b_1, \ldots, w_{i_s}=b_s] =
    \prod_{j=1}^t \Pr_{w \in (\Omega, \mu)}[w_{i_j} = b_j]
    $$
    We say that $(\Omega, \mu)$ is \emph{balanced} if for every $i \in
    [k], b \in \Omega_i$, we have that $\Pr_{w \in (\Omega, \mu)}[w_i
    = b] = 1/|\Omega_i|$.
  \end{definition}

The following theorem considers low influence functions that act on 
correlated spaces where the correlation is given by a $t$-wise independent 
probability measure for $t \geq 2$. It shows that in this case, the functions 
have almost the same distribution as if they were completely independent. 
Moreover, the result holds even if some of the functions have large influences 
as long as in each coordinate not more than $t$ functions 
have large influences. 

  \begin{theorem}[\cite{mossel07noise}, Theorem 6.6 and Lemma 6.9]
    \label{thm:correlation_bound}
    Let $(\Omega, \mu)$ be a finite probability space over $\Omega =
    \prod_{i=1}^k \Omega_i$ with the following properties:
    \begin{enumerate}
    \item[(a)] $\mu$ is $t$-wise independent.
    \item[(b)] For all $i \in [k]$ and $b_i \in \Omega_i$, $\mu_i(b_i) > 0$.
    \item[(c)] $\rho(\Omega_1, \ldots, \Omega_k; \mu) < 1$.
    \end{enumerate}
    Then for all $\epsilon > 0$ there exists a $\tau > 0$ and $d > 0$ such that
    the following holds.  Let $f_1, \ldots, f_k$ be functions $f_i: \Omega_i^n \rightarrow
    [0,1]$ satisfying that, for all $1 \le j \le n$,
    $$|\{\,i\,:\,\Inf^{\le d}_{j}(f_i) \ge \tau\,\}| \le t.$$
    Then
    $$\left| \E_{w_1,\ldots,w_n}\left[ \prod_{i=1}^{k} f_i(w_{1,i}, \ldots, w_{n,i}) \right] - \prod_{i=1}^k \E_{w_1,\ldots,w_n}\left[ f_i(w_{1,i}, \ldots, w_{n,i}) \right]
    \right| \le \epsilon,$$
    where $w_1, \ldots, w_n$ are drawn independently from $(\Omega,
    \mu)$, and $w_{i,j} \in \Omega_{j}$ denotes the $j$th coordinate of
    $w_i$.
  \end{theorem}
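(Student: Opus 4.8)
The plan is to prove the statement as an \emph{invariance principle by hybridisation}. The quantity $\prod_{i=1}^k \E[f_i]$ is exactly $\E[\prod_{i=1}^k f_i]$ computed when all $nk$ coordinates $w_{\ell,j}$ are drawn independently from the marginals $\mu_j$, because distinct $f_i$ then act on disjoint — hence independent — coordinate blocks $\Omega_i^n$, while the marginal law of the coordinates feeding each individual $f_i$ is $\mu_i^{\otimes n}$ in the correlated as in the independent ensemble. So it suffices to show that decorrelating $(\Omega,\mu)^{\otimes n}$, one copy $w_\ell$ at a time, perturbs $\E[\prod_i f_i]$ by at most $\epsilon$. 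To set up, fix for each $j\in[k]$ an orthonormal basis of $L^2(\Omega_j,\mu_j)$ with the constant $1$ first; expanding each $f_i$ multilinearly in these basis functions over the $n$ copies identifies ``degree'' with the number of non-constant factors and exhibits $\Inf^{\le d}_\ell(f_i)$ as the low-order coefficient mass touching copy $\ell$. A preliminary step replaces each $f_i$ by a smoothed version $T_{1-\gamma}f_i$, so that its $\le d$-degree influences control its \emph{total} influences (any high-degree tail is damped by $(1-\gamma)^{2d}$, which we force to be $<\tau$), and so that $\sum_\ell \Inf_\ell(T_{1-\gamma}f_i)=\Ordo(1/\gamma)$, i.e.\ a bounded effective-degree budget. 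This reduction leans on full support of the marginals (condition (b)) and on $\rho<1$ (condition (c)) via the hypercontractive inequalities for the correlated space.

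The heart is the Lindeberg swap. Form hybrids $H_0,\dots,H_n$ where in $H_\ell$ copies $1,\dots,\ell$ are drawn from $\mu_1\otimes\cdots\otimes\mu_k$ and copies $\ell+1,\dots,n$ from $\mu$, so $H_0$ is the correlated ensemble and $H_n$ the fully independent one. To bound $|\E_{H_{\ell-1}}[\prod_i f_i]-\E_{H_\ell}[\prod_i f_i]|$, condition on all copies except the $\ell$-th (which is independent of them in either hybrid), write $f_i=f_i^{(0)}+\sum_{b\neq\mathrm{const}}X^{(i)}_{\ell,b}\,f_i^{(b)}$ with $f_i^{(0)},f_i^{(b)}$ not depending on copy $\ell$, and expand $\prod_i f_i=\sum_{A\subseteq[k]}\big(\prod_{i\notin A}f_i^{(0)}\big)\prod_{i\in A}\big(\sum_b X^{(i)}_{\ell,b}f_i^{(b)}\big)$. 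The $A=\emptyset$ term is copy-$\ell$-free and identical in both hybrids; after taking $\E$ over copy $\ell$, the terms with $1\le|A|\le t$ vanish in \emph{both} hybrids — for $|A|=1$ by orthogonality to the constant, and for $2\le|A|\le t$ because the joint moment $\E\big[\prod_{i\in A}X^{(i)}_{\ell,b_i}\big]$ equals a product of (zero) means by $t$-wise independence (condition (a)). Only terms with $|A|\ge t+1$ survive, and there the ``at most $t$ large influences per coordinate'' hypothesis supplies, for each such $A$, some $i_0\in A$ with $\Inf^{\le d}_\ell(f_{i_0})<\tau$, hence (after smoothing) $\|f_{i_0}^{(b)}\|_2$ small; bounding the remaining factors by hypercontractivity and summing over $\ell$ against the $\Ordo(1/\gamma)$ influence budget makes the telescoped error $\Ordo(\epsilon)$ once $\tau$ is small and $d$ large in terms of $\epsilon,\gamma,k,t$ and the hypercontractivity constant of $\mu$. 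Combined with $\E_{H_n}[\prod_i f_i]=\prod_i\E[f_i]$, this gives the claim.

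The main obstacle I anticipate is the preliminary smoothing reduction, not the swap: one must show that passing from $f_i$ to $T_{1-\gamma}f_i$ changes $\E_{(\Omega,\mu)^{\otimes n}}[\prod_i f_i]$ by only $\Ordo(\epsilon)$, uniformly over all $f_i:\Omega_i^n\to[0,1]$ and all $n$, and this is exactly where the structure of $\mu$ — $t$-wise independence together with $\rho<1$ and full support — is used, since the naive estimate ``$\|f_i-T_{1-\gamma}f_i\|_2$ is small'' is false for high-frequency bounded functions. A secondary technical point is the quantifier bookkeeping: the hypothesis already fixes the threshold $\tau$ and the degree $d$ at which influences are measured, so one must arrange that the $d,\tau$ forced by the smoothing estimates (roughly $(1-\gamma)^{2d}<\tau$) are compatible with those required by the swap estimate; this dictates choosing $\epsilon$, then $\gamma$, then the swap threshold, then $d$, and finally $\tau$, in that order.
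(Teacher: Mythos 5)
This statement is not proved in the paper at all: it is quoted verbatim as Theorem 6.6 and Lemma 6.9 of \cite{mossel07noise} and used as a black box, so there is no in-paper proof to compare against. Judged on its own terms, your sketch has the right architecture and it does capture the one idea that makes the theorem work: in the copy-by-copy Lindeberg swap, after expanding each $f_i$ in an orthonormal basis of the $\ell$-th copy, every term whose set $A$ of non-constant factors has $1\le|A|\le t$ has expectation zero under \emph{both} the correlated measure (by $t$-wise independence, condition (a)) and the product of marginals, while for $|A|\ge t+1$ the hypothesis guarantees some $i_0\in A$ with small low-degree influence at $\ell$. That is exactly where conditions (a) and the influence hypothesis enter, and your bookkeeping of the order of quantifiers ($\epsilon$, then $\gamma$, then $d$ with $(1-\gamma)^{2d}<\tau$, then $\tau$) is correct.

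The genuine gap is the one you flag yourself and then do not close: the reduction to smoothed functions. You need that replacing each $f_i$ by $T_{1-\gamma}f_i$ changes $\E_{\mu^{\otimes n}}[\prod_i f_i]$ by at most $\Ordo(\epsilon)$ for some $\gamma=\gamma(\epsilon,k,\rho)$ independent of $n$ and of the $f_i$, and as you note the naive $L^2$ estimate fails. This is precisely the content of the cited Lemma 6.9 of \cite{mossel07noise}, and it is not a routine application of hypercontractivity as your sketch suggests: it is proved by showing that, because $\rho(\Omega_1,\ldots,\Omega_k;\mu)<1$, the conditional-expectation (Markov) operators between the coordinate blocks contract the degree-$>d$ Efron--Stein components of $f_i$ by a factor $\rho^{d}$, so the high-degree tail of any one function is nearly orthogonal to products of the others; one then peels off the functions one at a time. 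Condition (c) is used only here, and condition (b) is what makes the hypercontractive constants in the swap step finite. Without an actual argument for this reduction the proof is incomplete; with it (i.e., importing Lemma 6.9), the rest of your outline can be made rigorous and is essentially the standard invariance-style proof of the quoted theorem.
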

  
  Note that a sufficient condition for (c) to hold in the above
  theorem is that for all $w \in \Omega$, $\mu(w) > 0$.

  Roughly speaking, the basic idea behind the theorem and its proof is
  that low influence functions cannot detect dependencies of high
  order -- in particular if the underlying measure is pairwise
  independent, then low influence functions of different coordinates
  are essentially independent.

  \section{Main theorem}

  In this section, we prove our main theorem. Note that it is a 
  generalization of Theorem~\ref{thm:main}.

  \begin{theorem}
    \label{thm:main_general}
    Let $P: [q]^k \rightarrow \{0,1\}$ be a $k$-ary predicate over a
    (finite) domain of size $q$, and let $\mu$ be a balanced $t$-wise
    independent distribution over $[q]^k$ such that $\Pr_{x \in
    ([q]^k,\mu)}[P(x)] > 0$.  Then, for any $\epsilon > 0$, the
    $(t+1,k)$-UGC implies that the $\MaxCSP{P}$ problem is NP-hard to
    approximate within
    $$
    \frac{|P^{-1}(1)|}{q^k \cdot \Pr_{x \in ([q]^k,\mu)}[P(x)]} + \epsilon
    $$
  \end{theorem}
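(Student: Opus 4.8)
\medskip
\noindent\textbf{Proof plan.}\quad
I would follow the standard long-code-test-plus-Unique-Games recipe, with the test tailored to $P$ and $\mu$. As a preliminary, replace $\mu$ by $\mu_\gamma := (1-\gamma)\mu + \gamma U$, where $U$ is uniform on $[q]^k$ and $\gamma>0$ is small: mixing preserves balance and $t$-wise independence (these conditions are convex once the marginals are fixed to uniform), and $\mu_\gamma$ has full support, so $\rho(\Omega_1,\dots,\Omega_k;\mu_\gamma)<1$ by the remark after Theorem~\ref{thm:correlation_bound}; since $\Pr_{\mu_\gamma}[P]\to\Pr_\mu[P]$ as $\gamma\to 0$, it suffices to prove the claimed hardness for $\mu_\gamma$ and let $\gamma\to 0$ at the end, absorbing the change into $\epsilon$. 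Now, given a $k$-ary Unique Label Cover instance $X$ on label set $[L]$, build a $\MaxCSP{P}$ instance whose variables are the entries of a \emph{folded} table $f_v:[q]^L\to[q]$ for each vertex $v$, where folding identifies $f_v(z+(c,\dots,c))$ with $f_v(z)+c$ for all $c\in[q]$; after rewriting each queried entry through its coset representative, every constraint the test emits has the form $P(\ell_1,\dots,\ell_k)$ with each $\ell_i$ a literal $x+a$, as required for $\MaxCSP{P}$. To emit a constraint, pick a hyperedge $(v_1,\dots,v_k)$ of $X$ with its permutations $\pi_1,\dots,\pi_k$, draw $w_1,\dots,w_L$ i.i.d.\ from $\mu_\gamma$, form strings $z^{(1)},\dots,z^{(k)}\in[q]^L$ whose $j$-th coordinates (reindexed through the $\pi_i$) are the coordinates of $w_j$, and add $P\big(f_{v_1}(z^{(1)}),\dots,f_{v_k}(z^{(k)})\big)$. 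No artificial noise is needed, since Theorem~\ref{thm:correlation_bound} already controls low-influence functions over $t$-wise independent spaces directly.

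\emph{Completeness.} If $\Opt_k(X)\ge 1-\delta$, take a witnessing labelling $\ell$ and let $f_v$ be the dictator $z\mapsto z_{\ell(v)}$, which is folded. On a hyperedge completely satisfied by $\ell$ all $\pi_i(\ell(v_i))$ equal a common coordinate $m$, so the emitted constraint evaluates to $P(w_m)$ with $w_m\sim\mu_\gamma$; hence this assignment satisfies at least $(1-\delta)\Pr_{\mu_\gamma}[P]$ of the constraints.

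\emph{Soundness.} Suppose an assignment $(f_v)$ satisfies more than $|P^{-1}(1)|/q^k+\epsilon$ of the constraints. Write $P(y)=\sum_{\alpha\in\Z_q^k}\widehat P(\alpha)\prod_i\chi_{\alpha_i}(y_i)$ with $\widehat P(0)=|P^{-1}(1)|/q^k$; the $\alpha=0$ term contributes exactly $|P^{-1}(1)|/q^k$ to the acceptance probability, so the remaining terms contribute more than $\epsilon$ in absolute value, and by averaging a constant fraction of hyperedges $e$ admit an $\alpha\ne0$ with $\big|\E_w\big[\prod_i\chi_{\alpha_i}(f_{v_i}(z^{(i)}))\big]\big|\ge\epsilon'$ for some $\epsilon'=\epsilon'(\epsilon,q,k)>0$. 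Let $S=\{i:\alpha_i\ne0\}$. If $|S|=1$ then, since $\mu_\gamma$ is balanced each $z^{(i)}$ has uniform marginal on $[q]^L$, and folding of $f_{v_i}$ forces $\E_z[\chi_\beta(f_{v_i}(z))]=0$ for $\beta\ne0$, so the term vanishes; hence $|S|\ge2$. Apply Theorem~\ref{thm:correlation_bound} to the functions $h_i:=\chi_{\alpha_i}\circ f_{v_i}$ (the theorem extends to bounded real, hence to complex, functions by decomposing into real and imaginary parts): $\E[h_i]=0$ for $i\in S$ and $h_i\equiv1$ otherwise, so the product of expectations is $0$, and for the correlation to be $\ge\epsilon'$ the hypothesis of the theorem must fail, i.e.\ some coordinate $j^\star$ has at least $t+1$ indices $i$ (necessarily in $S$) with $\Inf^{\le d}_{j^\star}(h_i)\ge\tau$; translating back, at least $t+1$ of the tables $f_{v_i}$ share, through the permutations $\pi_i$, a coordinate of low-degree influence at least $\tau'$. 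Decode $X$ by labelling each $v$ with a uniformly random coordinate among the $\Ordo_{\tau,d}(1)$ many with $\Inf^{\le d}_j(f_v)\ge\tau'$: on each good hyperedge, with probability $\Omega_{\epsilon,q,k}(1)$ this matches $t+1$ endpoints on $j^\star$, so that edge is $(t+1)$-wise satisfied, giving $\Opt_{t+1}(X)=\Omega_{\epsilon,q,k}(1)$. Taking $\delta$ small enough contradicts the $(t+1,k)$-UGC; the hardness ratio obtained is $(|P^{-1}(1)|/q^k+\epsilon)/((1-\delta)\Pr_{\mu_\gamma}[P])$, which tends to $|P^{-1}(1)|/(q^k\Pr_\mu[P])$ as $\gamma,\delta,\epsilon\to0$.

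\emph{Main obstacle.} The crux is the soundness step, where folding, balance, and the correlation bound must mesh: folding is what turns the construction into a genuine $\MaxCSP{P}$ instance (literals $x+a$) \emph{and} what kills the $|S|=1$ Fourier terms; balance of $\mu$ is precisely what makes the $z^{(i)}$ marginals uniform so that folding yields $\E[\chi_\beta(f)]=0$; and one must still transfer Theorem~\ref{thm:correlation_bound} from $[0,1]$-valued functions to the complex characters $\chi_{\alpha_i}\circ f_{v_i}$ and track how their influential coordinates correspond to those of $f_{v_i}$ under the $\pi_i$. The preliminary mixing with $U$ to force $\rho<1$ is genuinely necessary---for instance the natural $\mu$ for a parity predicate has $\rho=1$---but is otherwise harmless.
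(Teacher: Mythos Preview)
Your proposal is correct and follows the same overall architecture as the paper: the identical reduction (folded long codes, the same verifier, the same mixing $\mu_\gamma=(1-\gamma)\mu+\gamma U$ to force $\rho<1$), the same completeness argument, and soundness via the contrapositive of Theorem~\ref{thm:correlation_bound} followed by the standard influence-decoding of Lemma~\ref{lemma:labelling_from_influences}.

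The one place you diverge is in how you unpack the acceptance probability in the soundness step. The paper writes
\[
\Pr[\text{accept}] \;=\; \sum_{x\in P^{-1}(1)} \E\Big[\prod_{i=1}^k g_{v_i,x_i}\!\circ\!\pi_i\Big],
\]
with $g_{v,a}(z)=\mathbf{1}[f_v(z)=a]$, finds one $x$ for which the expectation exceeds $1/q^k+\epsilon'$, notes that folding gives $\E[g_{v_i,x_i}]=1/q$ so that $\prod_i\E=1/q^k$ exactly, and applies Theorem~\ref{thm:correlation_bound} directly to these $[0,1]$-valued functions. You instead Fourier-expand $P$ over $\Z_q^k$, isolate a nonzero $\alpha$ with large $\big|\E\big[\prod_i\chi_{\alpha_i}(f_{v_i})\big]\big|$, kill $|S|=1$ via folding, and then invoke Theorem~\ref{thm:correlation_bound} on the characters $h_i=\chi_{\alpha_i}\circ f_{v_i}$. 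Both routes are sound, but the paper's indicator decomposition is cleaner on three counts: the $g_{v,a}$ are already $[0,1]$-valued, so no extension of Theorem~\ref{thm:correlation_bound} to complex outputs is needed; all $k$ factors are nontrivial, so you never need to track the support $S$ of $\alpha$ or argue separately that $|S|\ge 2$ (indeed $|S|\ge t+1$ falls out automatically in your argument, since only functions with $\alpha_i\ne0$ can be influential); and the decoding uses the influences of the $g_{v,a}$ directly, whereas you must still pass from large $\Inf_j^{\le d}(\chi_{\alpha_i}\circ f_{v_i})$ to large $\Inf_j^{\le d}(g_{v_i,a})$ for some $a$ (which follows since $\chi_{\alpha_i}\circ f_{v_i}=\sum_a\chi_{\alpha_i}(a)\,g_{v_i,a}$, but you should say so). Your character approach buys nothing extra here, though it is the more natural language when $q=2$.
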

  
  In particular, note that if $\Pr_{x \in ([q]^k,\mu)}[P(x)] = 1$, i.e., if
  the support of $\mu$ is entirely contained in the set of
  satisfying assignments to $P$, then $P$ is approximation resistant.
  It is also hereditary approximation resistant, since the support of
  $\mu$ will still be contained in $P^{-1}(1)$ when we add more
  satisfying assignments to $P$.
  
\paragraph{Reduction.}
  Given a $k$-ary Unique Label Cover instance $X$, the prover writes
  down the table of a function $f_v: [q]^L \rightarrow [q]$ for each
  $v$, which is supposed to be the long code of the label of the
  vertex $v$.  Furthermore, we will assume that $f_v$ is folded, i.e.,
  that for every $x \in [q]^k$ and $a \in [q]$, we have $f_v(x +
  (a,\ldots,a)) = f_v(x) + a$ (where the definition of ``$+$'' in
  $[q]$ is arbitrary as long as $([q],+)$ is an Abelian group).  When
  reading the value of $f_v(x_1, \ldots, x_L)$, the verifier can
  enforce this condition by instead querying $f_v(x_1 - x_1, x_2-x_1,
  \ldots, x_L-x_1)$ and adding $x_1$ to the result.  Let $\eta > 0$ be
  a parameter, the value of which will be determined later, and define
  a probability distribution $\mu'$ on $[q]^k$ by 
  $$\mu'(w) = (1-\eta)\cdot \mu(w) + \eta \cdot \mu_{U}(w),$$ where
  $\mu_U$ is the uniform distribution on $[q]^k$, i.e., $\mu_U(w) =
  1/q^k$.  Given a proof $\Sigma = \{f_v\}_{v \in V}$ of supposed long
  codes for a good labelling of $X$, the verifier checks $\Sigma$ as
  follows.
 
  \begin{algorithm}[!h]
    \caption{The verifier $\verifier$}
    \algname{$\verifier$}{$X$, $\Sigma = \{f_v\}_{v \in V}$}
    
    \begin{algtab}[0.7cm]
      Pick a random edge $e = (v_1, \ldots, v_k)$ with permutations
      $\pi_1, \ldots, \pi_k$.\\

      For each $i \in [L]$, draw $w_i$ randomly from $([q]^k, \mu')$.\\

      For each $j \in [k]$, let $x_j = w_{1,j} \ldots w_{L,j}$, and
      let $b_j = f_{v_j}\pi_j(x_j)$.\\

      Accept if $P(b_1, \ldots, b_k)$.
    \end{algtab}
  \end{algorithm}

  \begin{lemma}[Completeness]
    \label{lemma:completeness}
    For any $\delta$, if $\Opt_{k}(X) \ge 1-\delta$, then there is a
    proof $\Sigma$ such that
    $$\Pr[\textrm{$\verifier(X, \Sigma)$ accepts}] \ge
    (1-\delta)(1-\eta)\Pr_{w \in ([q]^k,\mu)}[P(w)]
    $$
  \end{lemma}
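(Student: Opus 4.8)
The plan is to take a near-optimal labelling for the $k$-wise version of the Unique Label Cover instance and use it to define an honest proof of long codes; then show that on edges which are completely satisfied by this labelling, the verifier's test succeeds with probability exactly $\Pr_{w \in ([q]^k,\mu)}[P(w)]$ conditioned on the uniform ``noise'' part of $\mu'$ not being selected. Concretely, let $\ell: V \to [L]$ be a labelling with $\Opt_k(X) \ge 1-\delta$, and for each vertex $v$ define $f_v$ to be the (folded) long code of $\ell(v)$, i.e.\ $f_v(x_1,\ldots,x_L) = x_{\ell(v)}$. This function is genuinely folded, so the folding enforcement done by the verifier does not change its values.

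First I would restrict attention to the event that the random edge $e = (v_1,\ldots,v_k)$ with permutations $\pi_1,\ldots,\pi_k$ is $k$-wise (completely) satisfied by $\ell$; by assumption this happens with probability at least $1-\delta$ over the choice of $e$. Fix such an edge, and let $m \in [L]$ be the common value $\pi_j(\ell(v_j)) = m$ for all $j \in [k]$. Then $b_j = f_{v_j}(\pi_j(x_j)) = (\pi_j(x_j))_m = x_{j,\pi_j^{-1}(m)}$ — wait, more carefully: the verifier forms $x_j = w_{1,j}\cdots w_{L,j}$ and queries $f_{v_j}(\pi_j(x_j))$, which for the honest long code equals the $\ell(v_j)$-th coordinate of $\pi_j(x_j)$, i.e.\ the $\pi_j(\ell(v_j))$-th coordinate of $x_j$ after accounting for how the permutation acts on coordinate indices. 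The key point is that $\pi_j(\ell(v_j)) = m$ for every $j$, so $b_j$ depends only on the $m$-th draw $w_m$, and in fact $(b_1,\ldots,b_k)$ is exactly (a coordinate-wise relabelling of) $w_m \in [q]^k$.

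Next I would observe that $w_m$ is distributed according to $\mu' = (1-\eta)\mu + \eta\mu_U$, and that the verifier accepts iff $P(b_1,\ldots,b_k)$ holds. Since $(b_1,\ldots,b_k)$ has the same distribution as $w_m \sim \mu'$ (the permutations $\pi_j$ only relabel the alphabet within each coordinate and the folding is honest, neither of which affects $\Pr[P]$ — here one uses that $P$ together with literals of the form $x_i + a$ is what $\MaxCSP{P}$ allows, and that $\mu$ is invariant enough, or simply that we may absorb these relabellings), the acceptance probability conditioned on this edge is at least $\Pr_{w \sim \mu'}[P(w)] \ge (1-\eta)\Pr_{w \sim \mu}[P(w)]$, the inequality coming from discarding the nonnegative $\eta\mu_U$ contribution. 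Combining with the $(1-\delta)$ probability that the edge is completely satisfied gives the claimed bound $(1-\delta)(1-\eta)\Pr_{w \in ([q]^k,\mu)}[P(w)]$.

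The main subtlety — and the step I would be most careful about — is checking that the action of the permutations $\pi_j$ (which permute the \emph{label set} $[L]$, hence the coordinates of the long code input) together with the folding does not disturb the distribution of $(b_1,\ldots,b_k)$: one must verify that when the edge is completely satisfied, all $k$ queried bits ``see'' the same draw $w_m$, and that the resulting $k$-tuple is distributed as $\mu'$ rather than some twisted version of it. This is where the balancedness of $\mu$ and the precise (folded, literal-based) formulation of $\MaxCSP{P}$ are used; everything else is bookkeeping.
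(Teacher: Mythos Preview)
Your approach is correct and essentially identical to the paper's: take a good labelling, encode it honestly with long codes, and observe that on a completely satisfied edge all $k$ queries read the same sample $w_m \sim \mu'$, so the conditional acceptance probability is at least $(1-\eta)\Pr_{w\sim\mu}[P(w)]$; then multiply by $1-\delta$. The subtlety you flag at the end does not actually arise: the permutations $\pi_j$ act on the label set $[L]$ (i.e., on the coordinates of the long-code input), not on the alphabet $[q]$, so on a satisfied edge one has $(b_1,\ldots,b_k) = w_m$ \emph{exactly}, with no relabelling and no appeal to balancedness needed for completeness.
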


  \begin{proof}
    Take a labelling $\ell$ for $X$ such that a fraction $\ge
    1-\delta$ of the edges are $k$-wise satisfied, and let $f_v: [q]^L
    \rightarrow [q]$ be the long code of the label $\ell(v)$ of vertex $v$.

    Let $(v_1, \ldots, v_k)$ be an edge that is $k$-wise satisfied by
    $\ell$.  Then $f_{v_1} \pi_1 = f_{v_2} \pi_2 = \ldots = f_{v_k}
    \pi_k$, each being the long code of $i := \pi_1(\ell(v_1))$.  The
    probability that $\verifier$ accepts is then exactly the probability
    that $P(w_{i})$ is true, which, since $w_i$ is drawn from $([q]^k,
    \mu)$ with probability $1-\eta$, is at least $(1-\eta)\Pr_{w \in
    ([q]^k,\mu)}[P(w)]$.

    The probability that the edge $e$ chosen by the verifier in step
    $1$ is satisfied by $\ell$ is at least $1-\delta$, and so we end
    up with the desired inequality.
  \end{proof}

  \begin{lemma}[Soundness]
    \label{lemma:soundness}
    For any $\epsilon > 0$, $\eta > 0$, there is a constant $\delta :=
    \delta(\epsilon, \eta, t, k, q) > 0$, such that if $\Opt_{t+1}(X) <
    \delta$, then for any proof $\Sigma$, we have
    $$\Pr[\textrm{$\verifier(X, \Sigma)$ accepts}] \le
    \frac{|P^{-1}(1)|}{q^k} + \epsilon$$
  \end{lemma}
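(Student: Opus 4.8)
The plan is to expand the acceptance probability in the character basis of the group $([q]^{k},+)$ and show that only the trivial character survives. Let $\{\chi_{\alpha}\}$ be the characters of $([q]^{k},+)$, indexed so that $\chi_{0}\equiv 1$; write $P=\sum_{\alpha}\hat P(\alpha)\chi_{\alpha}$ with $\hat P(0)=\E_{y}[P(y)]=|P^{-1}(1)|/q^{k}$, and recall that the test sets $b_{j}=f_{v_{j}}\pi_{j}(x_{j})$. Linearity of expectation gives
\[
\Pr[\verifier \text{ accepts}]=\frac{|P^{-1}(1)|}{q^{k}}+\sum_{\alpha\neq 0}\hat P(\alpha)\,\E_{e}\,\E_{w}\Bigl[\,\prod_{j\in T_{\alpha}}\chi_{\alpha_{j}}(b_{j})\,\Bigr],
\]
where $T_{\alpha}=\{j:\alpha_{j}\neq 0\}$ is nonempty for $\alpha\neq 0$. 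Since $\sum_{\alpha}|\hat P(\alpha)|\le q^{k/2}$ by Cauchy--Schwarz (using $\sum_{\alpha}|\hat P(\alpha)|^{2}=\E[P]\le 1$), it suffices to bound each inner expectation by a quantity that becomes $\Ordo(\epsilon/q^{k/2})$ once $\delta$ is small.

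First I would dispose of the \emph{low-order} characters, those with $1\le|T_{\alpha}|\le t$: for these the term vanishes identically. Indeed $\mu'=(1-\eta)\mu+\eta\mu_{U}$ is again balanced and $t$-wise independent, so its marginal on any set of at most $t$ coordinates is the uniform product distribution; since the $w_{i}$ are drawn independently, the strings $\{x_{j}\}_{j\in T_{\alpha}}$ are mutually independent, each uniform on $[q]^{L}$. Hence $\E_{w}[\prod_{j\in T_{\alpha}}\chi_{\alpha_{j}}(b_{j})]=\prod_{j\in T_{\alpha}}\E_{x_{j}}[\chi_{\alpha_{j}}(f_{v_{j}}\pi_{j}(x_{j}))]$, and each factor is $0$ because $f_{v_{j}}$ is folded: replacing $x_{j}$ by $x_{j}+(a,\ldots,a)$ for uniformly random $a\in[q]$ leaves the distribution unchanged but multiplies the factor by $\E_{a}[\chi_{\alpha_{j}}(a)]=0$ (as $\alpha_{j}\neq 0$).

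For a \emph{high-order} character $\alpha$, one with $|T_{\alpha}|>t$, I would invoke Theorem~\ref{thm:correlation_bound} applied to the marginal $\mu'_{T_{\alpha}}$ of $\mu'$ on the coordinates in $T_{\alpha}$: this space is balanced, $t$-wise independent, and of full support (since $\mu'(w)\ge\eta/q^{k}>0$), hence of correlation $<1$ by the remark following that theorem. Let $\tau,d>0$ be the resulting parameters for a target error $\epsilon'$ (take the worst pair over the finitely many $\alpha$). Call the edge $e$ \emph{$\alpha$-nice} if for every coordinate $\ell\in[L]$ at most $t$ of the functions $g_{j}:=\chi_{\alpha_{j}}(f_{v_{j}})\circ\pi_{j}$ ($j\in T_{\alpha}$) satisfy $\Inf_{\ell}^{\le d}(g_{j})\ge\tau$. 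On an $\alpha$-nice edge, Theorem~\ref{thm:correlation_bound} applied to the $g_{j}$ — after the standard passage to $[0,1]$-valued functions via real and imaginary parts, which costs only a $q,k$-dependent constant factor in the error — gives
\[
\Bigl|\,\E_{w}\bigl[\,\prod_{j\in T_{\alpha}}\chi_{\alpha_{j}}(b_{j})\,\bigr]-\prod_{j\in T_{\alpha}}\E_{w}\bigl[\chi_{\alpha_{j}}(b_{j})\bigr]\,\Bigr|\le\epsilon',
\]
and the product of marginals is again $0$ by the folding argument ($x_{j}$ is marginally uniform since $\mu'$ is balanced). So $\alpha$-nice edges contribute at most $\epsilon'$ to the $\alpha$-term, while non-$\alpha$-nice edges contribute at most $1$ each.

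It then remains to bound the probability that a random edge fails to be $\alpha$-nice, which is where the hypothesis $\Opt_{t+1}(X)<\delta$ enters. If $e$ is not $\alpha$-nice there are a coordinate $\ell^{*}$ and a set $J\subseteq T_{\alpha}$ with $|J|=t+1$ such that $\Inf_{\ell^{*}}^{\le d}(g_{j})\ge\tau$ for all $j\in J$; since $\Inf_{\ell^{*}}^{\le d}(\chi_{\alpha_{j}}(f_{v_{j}})\circ\pi_{j})=\Inf_{\pi_{j}^{-1}(\ell^{*})}^{\le d}(\chi_{\alpha_{j}}(f_{v_{j}}))$, this says $\pi_{j}^{-1}(\ell^{*})\in\mathrm{Cand}(v_{j})$ for all $j\in J$, where $\mathrm{Cand}(v):=\{\ell:\exists\,\beta\neq 0,\ \Inf_{\ell}^{\le d}(\chi_{\beta}(f_{v}))\ge\tau\}$; by the low-degree influence inequality $\sum_{\ell}\Inf_{\ell}^{\le d}(\cdot)\le d$ we get $|\mathrm{Cand}(v)|\le(q-1)d/\tau=:M$ for every $v$. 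Consider the randomized labelling giving each $v$ a uniformly random element of $\mathrm{Cand}(v)$ (arbitrary if empty): conditioned on any fixed non-$\alpha$-nice edge, with probability at least $M^{-(t+1)}$ it labels $v_{j}$ by $\pi_{j}^{-1}(\ell^{*})$ for every $j\in J$, so $\pi_{j}(\ell(v_{j}))=\ell^{*}$ for all $t+1$ indices $j\in J$ and the edge is $(t+1)$-wise satisfied. Hence the fraction of non-$\alpha$-nice edges is at most $M^{t+1}\Opt_{t+1}(X)<M^{t+1}\delta$, so each $\alpha$-term is at most $\epsilon'+M^{t+1}\delta$ in absolute value, and altogether $\bigl|\Pr[\verifier\text{ accepts}]-|P^{-1}(1)|/q^{k}\bigr|\le q^{k/2}(\epsilon'+M^{t+1}\delta)$. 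Taking $\epsilon'=\epsilon/(2q^{k/2})$ (which fixes $\tau,d$, hence $M$) and then $\delta$ small enough that $q^{k/2}M^{t+1}\delta<\epsilon/2$ completes the argument. I expect the main obstacle to be exactly this last step together with correctly setting up the previous one: one must verify the hypotheses of Theorem~\ref{thm:correlation_bound} on the restricted space and then see that the \emph{failure} of its ``at most $t$ large low-degree influences per coordinate'' condition decodes into a labelling — and because the decoded labelling makes $t+1$ long codes agree on an edge, it is the $(t+1,k)$-UGC rather than the $(t,k)$-UGC that is needed; tracking the permutations $\pi_{j}$, the restriction of $\mu'$, and the complex-to-$[0,1]$ reduction are the remaining points requiring care.
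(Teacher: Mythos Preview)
Your argument is correct and uses the same two engines as the paper --- Theorem~\ref{thm:correlation_bound} for the product bound and the randomized candidate-labelling for decoding --- but your \emph{decomposition} is different. You expand $P$ in the character basis of $([q]^k,+)$ and handle each $\alpha$ separately (killing the $|T_\alpha|\le t$ characters exactly via $t$-wise independence and folding, and bounding the rest via Theorem~\ref{thm:correlation_bound} on the marginal $\mu'_{T_\alpha}$). The paper instead writes the acceptance probability as $\sum_{x\in P^{-1}(1)}\E\big[\prod_{j=1}^k g_{v_j,x_j}\pi_j\big]$ with the \emph{indicator} functions $g_{v,a}(y)=\mathbf{1}[f_v(y)=a]$, fixes by averaging one ``good'' edge and one $x\in P^{-1}(1)$ for which the product expectation exceeds $q^{-k}+\epsilon'$, observes that $q^{-k}=\prod_j \E[g_{v_j,x_j}\pi_j]$ by folding, and applies Theorem~\ref{thm:correlation_bound} once, to all $k$ coordinates simultaneously.

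What each approach buys: the paper's route is shorter and avoids both the case split on $|T_\alpha|$ and the complex-to-$[0,1]$ reduction you flag (the $g_{v,a}$ are already $\{0,1\}$-valued), and it applies the correlation bound on the full space $([q]^k,\mu')$ rather than on a family of marginals. Your route, on the other hand, makes transparent \emph{why} $t$-wise independence is exactly the right hypothesis: all Fourier mass of weight at most $t$ is annihilated outright, and only the weight-$>t$ part needs the low-influence machinery. The two proofs meet at the same contrapositive step (some coordinate with $t{+}1$ large low-degree influences) and the same labelling lemma; the quantitative loss in your version is the harmless $q^{k/2}$ from $\sum_\alpha|\hat P(\alpha)|$ together with the $2^{O(k)}$ blow-up from expanding $\prod_j(\mathrm{Re}\,g_j+i\,\mathrm{Im}\,g_j)$, both absorbed into the choice of $\epsilon'$ exactly as you do.
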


  \begin{proof}
    Assume that 
    \begin{equation}
      \label{eqn:soundness_assumption}
      \Pr[\textrm{$\verifier(X, \Sigma)$ accepts}] >
      \frac{|P^{-1}(1)|}{q^k} + \epsilon.
    \end{equation}
    We need to prove that this implies that there is a $\delta :=
    \delta(\epsilon, \eta, t, k, q) > 0$ such that $\Opt_{t+1}(X)
    \ge \delta$.

    Equation~\ref{eqn:soundness_assumption} implies that for a
    fraction of at least $\epsilon/2$ of the edges $e$, the
    probability that $\verifier(X, \Sigma)$ accepts when choosing $e$
    is at least $\frac{|P^{-1}(1)|}{q^k} + \epsilon/2$.

    Let $e = (v_1, \ldots, v_k)$ with permutations $\pi_1, \ldots,
    \pi_k$ be such a ``good'' edge.  For $v \in V$ and $a \in [q]$,
    define $g_{v,a}: [q]^L \rightarrow \{0,1\}$ by
    $$
    g_{v,a}(x) = \left\{
    \begin{array}{ll}
      1 & \textrm{if $f_{v}(x) = a$}\\
      0 & \textrm{otherwise}
    \end{array}\right..
    $$
    The probability that $\verifier$ accepts when choosing $e$ is then
    exactly
    $$
    \sum_{x \in P^{-1}(1)} \E_{w_1,\ldots,w_{L}}\left[ \prod_{i=1}^k g_{v_i,x_i}\pi_i(w_{1,i}, \ldots, w_{L,i}) \right],
    $$ 
    which, by the choice of $e$, is greater than $|P^{-1}(1)|/q^k +
    \epsilon/2$.  This implies that there is some $x \in P^{-1}(1)$
    such that
    \begin{eqnarray*}
      \E_{w_1,\ldots,w_{L}}\left[ \prod_{i=1}^k g_{v_i,x_i} \pi_i(w_{1,i}, \ldots, w_{L,i}) \right] &>& 1/q^k + \epsilon' \\
      &=& \prod_{i=1}^k \E_{w_1, \ldots, w_L} [ g_{v_i,x_i} \pi_i (w_{1,i}, \ldots, w_{L,i}) ] + \epsilon',
    \end{eqnarray*}
    where $\epsilon' = \epsilon / 2 / |P^{-1}(1)|$ and the last
    equality uses that, because $f_{v_i}$ is folded and $\mu$ is
    balanced, we have $\E_{w_1, \ldots, w_L} [
    g_{v_i,x_i}(w_{1,i}, \ldots, w_{L,i}) ] = 1/q$.

    Note that because both $\mu$ and $\mu_U$ are $t$-wise independent,
    $\mu'$ is also $t$-wise independent.  Also, we have that for each
    $w \in [q]^k$, $\mu'(w) \ge \eta / q^k > 0$, which
    implies both conditions (b) and (c) of
    Theorem~\ref{thm:correlation_bound}.  Then, the contrapositive
    formulation of Theorem~\ref{thm:correlation_bound} implies that
    there is an $i \in [L]$ and at least $t+1$ indices $J \subseteq
    [k]$ such that $\Inf^{\le d}_{\pi_j^{-1}(i)}(g_{v_j,x_j}) =
    \Inf^{\le d}_{i}(g_{v_j,x_{j}} \pi_j) \ge \tau$ for all $j \in J$,
    where $\tau$ and $d$ are functions of $\epsilon$, $\eta$, $t$,
    $k$, and $q$.

    The process of constructing a good labelling of $X$ from this
    point is standard.  For completeness, we give a proof in the
    appendix.  Specifically,
    Lemma~\ref{lemma:labelling_from_influences} gives that
    $\Opt_{t+1}(X) \ge \epsilon/2 \left(\frac{\tau}{d \cdot
    q}\right)^{t+1}$, which is a function of $\epsilon$, $\eta$,
    $t$, $k$, and $q$, as desired.
  \end{proof}

  It is now straightforward to prove Theorem~\ref{thm:main_general}.
  
  \begin{proof}[Proof of Theorem~\ref{thm:main_general}]
    Let $c = \Pr_{x \in ([q]^k, \mu)}[P(x)]$, $s = |P^{-1}(1)|/q^k$
    and $\eta = \min(1/4, \frac{\epsilon c}{4 s})$.  Note that since
    the statement of the Theorem requires $c > 0$ we also have $s > 0$
    and $\eta > 0$.  Assume that the $(t+1,k)$-UGC is true, and pick
    $L$ large enough so that it is NP-hard to distinguish between
    $k$-ary Unique Label Cover instances $X$ with $\Opt_{t+1}(X) \le
    \delta$ and $\Opt_k(X) \ge 1-\delta$, where $\delta = \min(\eta,
    \delta(\epsilon c/4, \eta, t, k, q))$, where $\delta( \ldots )$
    is the function from Lemma~\ref{lemma:soundness}.  By Lemmas
    \ref{lemma:completeness} and \ref{lemma:soundness}, we then get
    that it is NP-hard to distinguish between $\MaxCSP{P}$ instances
    with $\Opt \ge (1-\delta)(1-\eta) c \ge (1-2\eta) c$ and $\Opt \le
    s + \epsilon c / 4$.  In other words, it is NP-hard to approximate
    the $\MaxCSP{P}$ problem within a factor
    $$ \frac{s + \epsilon c/4}{(1-2\eta) c} \le \frac{s(1+4\eta)}{c} +
    (1 + 4\eta)\epsilon/4 \le s/c + \epsilon
    $$
  \end{proof}

  \section{Inapproximability for \MaxkCSPq}

  As a simple corollary to Theorem~\ref{thm:main_general}, we have:

  \begin{corollary}
    \label{corollary:pairwise_hardness}
    Let $t \ge 2$ and let $\mu$ be a balanced $t$-wise independent
    distribution over $[q]^k$.
    Then the $(t+1,k)$-UGC implies that that \MaxkCSPq{} problem is
    NP-hard to approximate within
    $$
    \frac{|\Support(\mu)|}{q^k}
    $$
  \end{corollary}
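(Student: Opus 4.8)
The plan is to obtain this as an immediate specialization of Theorem~\ref{thm:main_general}. Given a balanced $t$-wise independent distribution $\mu$ over $[q]^k$, first define the predicate $P \colon [q]^k \to \{0,1\}$ to be the indicator of the support of $\mu$, that is, $P(x) = 1$ if and only if $x \in \Support(\mu)$. By construction $\mu$ is a balanced $t$-wise independent distribution over $[q]^k$ with $\Pr_{x \in ([q]^k,\mu)}[P(x)] = 1 > 0$, and $|P^{-1}(1)| = |\Support(\mu)|$, so $P$ and $\mu$ satisfy the hypotheses of Theorem~\ref{thm:main_general}.

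Applying that theorem, for every $\epsilon > 0$ the $(t+1,k)$-UGC implies that $\MaxCSP{P}$ is NP-hard to approximate within
$$
\frac{|P^{-1}(1)|}{q^k \cdot \Pr_{x \in ([q]^k,\mu)}[P(x)]} + \epsilon = \frac{|\Support(\mu)|}{q^k} + \epsilon .
$$
To conclude, observe that $\MaxCSP{P}$ is a restriction of \MaxkCSPq{}: each constraint $P(l_1, \dots, l_k)$, with literals $l_i$ of the form $x_{j_i} + a_i$, is itself a $k$-ary constraint over the domain $[q]$, so any $\MaxCSP{P}$ instance is also a \MaxkCSPq{} instance with the same optimal value. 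Hence any $\alpha$-approximation algorithm for \MaxkCSPq{} is in particular an $\alpha$-approximation algorithm for $\MaxCSP{P}$, and the hardness above carries over. Since the bound $|\Support(\mu)|/q^k + \epsilon$ holds for every $\epsilon > 0$, we obtain the stated inapproximability.

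I do not expect any real obstacle here: all the work is done by Theorem~\ref{thm:main_general}, and the only points to verify are that the indicator of $\Support(\mu)$ is a legitimate predicate on which $\mu$ places all its mass on satisfying assignments, and that $\MaxCSP{P}$ embeds into \MaxkCSPq{} — both immediate from the definitions. (The statement of the corollary suppresses the additive $\epsilon$; it should be read as asserting hardness within $|\Support(\mu)|/q^k + \epsilon$ for every $\epsilon > 0$, exactly as in Theorem~\ref{thm:main_general}.)
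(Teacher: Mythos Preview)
Your proposal is correct and matches the paper's intended argument: the paper presents this corollary immediately after Theorem~\ref{thm:main_general} with the phrase ``As a simple corollary'' and gives no explicit proof, so the specialization you describe---taking $P$ to be the indicator of $\Support(\mu)$ and noting that $\MaxCSP{P}$ is a special case of \MaxkCSPq{}---is exactly what is meant. Your remark about the suppressed additive $\epsilon$ is also apt.
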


  Thus, we have reduced the problem of obtaining strong
  inapproximability for \MaxkCSPq{} to the problem of finding small
  $t$-wise independent distributions.  As we are mainly interested in
  the strongest possible results that can be obtained by this method,
  our main focus will be on pairwise independence, i.e, $t = 2$.
  However, let us first mention two simple corollaries for general
  values of $t$.  

  For $q = 2$, it is well-known that the binary BCH code gives a
  $t$-wise independent distribution over $\{0,1\}^k$ with support size
  $\Ordo(k^{\lfloor t/2 \rfloor})$ \cite{alon86fast}.  In other words,
  the $(t+1,k)$-UGC implies that the \MaxkCSP{} problem is NP-hard to
  approximate within $\Ordo(k^{\lceil t/2 \rceil}/2^k)$.  Note in
  particular that the $(4,k)$-UGC suffices to get a hardness of
  $\Ordo(k/2^k)$ for \MaxkCSP{}, which is tight up to a constant
  factor.  

  For $q$ a prime power and large enough so that $q \ge k$, there are
  $t$-wise independent distributions over $[q]^k$ with support size
  $q^t$ based on evaluating a random degree-$t$ polynomial over
  $\F_q$.  Thus, in this setting, the $(t+1,k)$-UGC implies a hardness
  factor of $q^{t-k}$ for the \MaxkCSPq{} problem.

  In the remainder of this section, we will focus on the details of
  constructions of pairwise independence, giving hardness for
  \MaxkCSPq{} under the $(3,k)$-UGC.

  \subsection{Theorems \ref{thm:hardness_general} and \ref{thm:hardness_primepower}}

  The pairwise independent distributions used to give Theorems
  \ref{thm:hardness_general} and \ref{thm:hardness_primepower} are
  both based on the following simple lemma, which is well-known but
  stated here in a slightly more general form than usual:

  \begin{lemma}
    \label{lemma:independence}
    Let $R$ be a finite commutative ring, and let $u, v \in R^n$ be
    two vectors over $R$ such that $u_i v_j - u_j v_i \in R^*$ for
    some $i, j$.\footnote{$R^*$ denotes the set of units of $R$.  In
    the case that $R$ is a field, the condition is equivalent to
    saying that $u$ and $v$ are linearly independent.}  Let $X \in
    R^n$ be a uniformly random vector over $R^n$ and let $\mu$ be the
    probability distribution over $R^2$ of $(\scalprod{u,X},
    \scalprod{v,X}) \in R^2$.  Then $\mu$ is a balanced pairwise
    independent distribution.
  \end{lemma}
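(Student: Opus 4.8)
The plan is to view the map $\phi \colon R^n \to R^2$ given by $\phi(X) = (\scalprod{u,X}, \scalprod{v,X})$ as a homomorphism of the underlying additive groups, to show that it is surjective, and then to invoke the standard fact that a surjective homomorphism of finite abelian groups pushes the uniform distribution on the source forward to the uniform distribution on the target. Note first that for a distribution supported on two coordinates, ``balanced'' and ``pairwise independent'' together say exactly that the joint distribution is uniform on $R^2$: balancedness gives that each of the two marginals is uniform on $R$, and pairwise independence is just independence of the two coordinates, so the combination is equivalent to $\mu$ being the uniform distribution on $R^2$. Thus it suffices to prove that $\phi$ carries the uniform distribution on $R^n$ to the uniform distribution on $R^2$.

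To prove surjectivity of $\phi$, I would restrict attention to the two coordinates $i$ and $j$ for which $u_i v_j - u_j v_i \in R^*$. Taking $X_\ell = 0$ for every $\ell \notin \{i,j\}$, we get $\phi(X) = M \cdot (X_i, X_j)^{\top}$, where $M = \begin{pmatrix} u_i & u_j \\ v_i & v_j \end{pmatrix}$ has determinant $\det M = u_i v_j - u_j v_i$. Since $R$ is commutative, a square matrix over $R$ is invertible precisely when its determinant is a unit; hence $M$ is invertible over $R$, and $(X_i, X_j) \mapsto M \cdot (X_i, X_j)^{\top}$ is a bijection of $R^2$. In particular every element of $R^2$ already lies in the image of $\phi$, so $\phi$ is surjective.

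To conclude, observe that $\phi$ is a homomorphism of the finite abelian groups $(R^n, +) \to (R^2, +)$, so every nonempty fiber $\phi^{-1}(a,b)$ is a coset of $\ker \phi$ and therefore has cardinality $|\ker \phi|$, independent of $(a,b)$. Since $\phi$ is onto $R^2$, these $|R|^2$ fibers partition $R^n$ into equal parts, so $|\ker\phi| = |R|^{n-2}$ and
$$
\Pr_{X}[\phi(X) = (a,b)] = \frac{|R|^{n-2}}{|R|^n} = \frac{1}{|R|^2}
$$
for every $(a,b) \in R^2$. Hence $\mu$ is uniform on $R^2$, which is exactly the assertion that it is balanced and pairwise independent.

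There is essentially no substantive obstacle in this argument; the only two points that deserve a moment's care are (i) using the ring-theoretic characterization of invertibility via the determinant being a unit, so that the proof goes through for an arbitrary finite commutative ring $R$ and not merely for fields, and (ii) recording at the outset that, for a two-coordinate distribution, balanced pairwise independence is literally the same thing as uniformity on $R^2$, so that the whole lemma reduces to the surjectivity of $\phi$.
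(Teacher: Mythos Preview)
Your proof is correct and essentially the same as the paper's: both reduce to the invertibility of the $2\times2$ matrix $\begin{pmatrix} u_i & u_j \\ v_i & v_j \end{pmatrix}$ over $R$ via its unit determinant, and both conclude that every fiber of $X \mapsto (\scalprod{u,X},\scalprod{v,X})$ has size $|R|^{n-2}$. The only cosmetic difference is that the paper phrases this by conditioning on the remaining coordinates $X_\ell$, $\ell\notin\{i,j\}$, and showing the resulting $2\times2$ system has a unique solution, whereas you package the same count via the coset structure of a surjective group homomorphism.
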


  \begin{proof}
    Without loss of generality, assume that $i = 1$ and $j = 2$.  It
    suffices to prove that, for all $(a,b) \in R^2$ and any choice of
    values of $X_3, \ldots, X_n$, we have
    $$\Pr[(\scalprod{u,X},\scalprod{v,X}) = (a,b) \,|\,X_3, \ldots,
      X_n] = 1/|R|^2.$$
    For this to be true, we need that the system
    $$
    \left\{
    \begin{array}{l c l c l}
      u_1X_1 &+& u_2X_2 &=& a'\\
      v_1X_1 &+& v_2X_2 &=& b'\\
    \end{array}
    \right.
    $$    
    has exactly one solution, where $a' = a - \sum_{i=3}^n u_iX_i$ and
    similarly for $b'$.  This in turn follows directly from the
    condition on $u$ and $v$.
  \end{proof}

  Consequently, given a set of $m$ vectors in $R^n$ such that any pair
  of them satisfy the condition of Lemma~\ref{lemma:independence}, we
  can construct a pairwise independent distribution over $R^m$ with
  support size $|R|^n$.

  Let us now prove Theorem~\ref{thm:hardness_general}.

  \begin{proof}[Proof of Theorem~\ref{thm:hardness_general}]
    Let $r = \lceil \log_2 k+1 \rceil$.  For a nonempty $S \subseteq
    [r]$, let $u_S \in \Z_q^r$ be the characteristic vector of $S$,
    i.e., $u_{S,i} = 1$ if $i \in S$, and $0$ otherwise.  Then, for
    any $S \ne T$, the vectors $u_S$ and $u_T$ satisfy the condition
    of Lemma~\ref{lemma:independence}, and thus, we have that
    $(\scalprod{u_S, X})_{S \subseteq [r]}$ for a uniformly random $X
    \in \Z_q^r$ induces a balanced pairwise independent distribution
    over $\Z_q^{2^r-1}$, with support size $q^r$.

    When $k=2^r-1$ we get a hardness of $q^{\log_2(k)-k}$, but for
    general values of $k$, in particular $k=2^{r-1}$, we may lose up
    to a factor $q$.
  \end{proof}

  We remark that for $q=2$ this construction gives exactly the
  predicate used by Samorodnitsky and Trevisan
  \cite{samorodnitsky05gowers}, giving an inapproximability of
  $2k/2^k$ for all $k$, and $(k+1)/2^k$ for all $k$ of the form
  $2^l-1$.


  


  Intuitively, it should be clear that when we have more structure on
  $R$ in Lemma~\ref{lemma:independence}, we should be able to find a
  larger collection of vectors where every pair satisfies the
  ``independence condition''.  This intuition leads us to
  Theorem~\ref{thm:hardness_primepower}, dealing with the special case
  of Theorem~\ref{thm:hardness_general} in which $q$ is a prime power.
  The construction of Theorem~\ref{thm:hardness_primepower} is
  essentially the same as that of \cite{obrien80pairwise}.

  \begin{proof}[Proof of Theorem~\ref{thm:hardness_primepower}]
    Let $r = \lceil \log_q(k(q-1)+1) \rceil$, and $n = (q^r-1)/(q-1)
    \ge k$.
    
    Let $\Proj(\F_q^r)$ be the projective space over $\F_q^r$, i.e.,
    $\Proj(\F_q^r) = (\F_q^r \setminus 0) / {\sim}$.  Here $\sim$ is the
    equivalence relation defined by $(x_1, \ldots, x_r) \sim (y_1,
    \ldots, y_r)$ if there exists a $c \in \F_q^*$ such that $x_i = c
    y_i$ for all $i$, i.e., if $(x_1, \ldots, x_r)$ and $(y_1, \ldots,
    y_r)$ are linearly independent.  We then have $|\Proj(\F_q^r)| =
    (q^r-1)/(q-1) = n$.  

    Choose $n$ vectors $u_1, \ldots, u_n \in \F_q^r$ as representatives
    from each of the equivalence classes of $\Proj(\F_q^r)$.  Then any
    pair $u_i, u_j$ satisfy the condition of
    Lemma~\ref{lemma:independence}, and as in
    Theorem~\ref{thm:hardness_general}, we get a balanced pairwise
    independent distribution over $\F_q^{n}$, with support size $q^r$.

    When $k = (q^r-1)/(q-1)$, this gives a hardness of $k(q-1)+1$, and
    for general $k$, in particular $k = (q^{r-1}-1)/(q-1)+1$, we lose
    a factor $q$ in the hardness ratio.
  \end{proof}

  Again, for $q = 2$, this construction gives the same predicate used
  by Samorodnitsky and Trevisan.  In the case that $q \ge k$, we get a
  hardness of $q^2/q^k$, the same factor as we get from the general
  construction for $t$-wise independence mentioned at the beginning of
  this section.






  \subsection{Theorem~\ref{thm:hardness_boolean_variables}}

  Let us now look closer at the special case of boolean variables,
  i.e., $q = 2$.  So far, we have only given a different proof of
  Samorodnitsky and Trevisan's result, but we will now show how to
  improve this.

  An Hadamard matrix is an $n \times n$ matrix over $\pm 1$ such that
  $H H^T = nI$, i.e., each pair of rows, and each pair of columns, are
  orthogonal.  Let $h(n)$ denote the smallest $n' \ge n$ such that
  there exists an $n' \times n'$ Hadamard matrix.  It is a well-known
  fact that Hadamard matrices give small pairwise independent
  distributions and thus give hardness of approximating \MaxkCSP{}.
  To be specific, we have the following proposition:
  \begin{proposition}
    \label{prop:hadamard_gives_hardness}
    For every $k \ge 3$, the $(3,k)$-UGC implies that the
    \MaxkCSP{} problem is UG-hard to approximate within $h(k+1)/2^k +
    \epsilon$.
  \end{proposition}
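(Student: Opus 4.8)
The plan is to invoke the main theorem (Theorem~\ref{thm:main_general}) with $q = 2$, $t = 2$, and the predicate $P$ taken to be the predicate on $k$ boolean variables whose accepting set is exactly the support of a pairwise independent distribution $\mu$ derived from an Hadamard matrix. So the task reduces to producing, for every $k \ge 3$, a balanced pairwise independent distribution over $\{0,1\}^k$ with support size $h(k+1)$, and then reading off the hardness ratio $|P^{-1}(1)|/2^k = h(k+1)/2^k$ (up to the $+\epsilon$ slack that Theorem~\ref{thm:main_general} already supplies).

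First I would recall the standard correspondence between Hadamard matrices and pairwise independent distributions. Let $n = h(k+1)$ and let $H$ be an $n \times n$ Hadamard matrix; by the usual normalization we may assume the first row (or a chosen row) is all ones, and more importantly that we can extract from $H$ a set of $n$ rows that, after the $\pm 1 \to \{0,1\}$ identification $b \mapsto (1-b)/2$, behave like balanced pairwise independent bits. Concretely, consider the distribution on $\{0,1\}^n$ obtained by picking a uniformly random column index $c \in [n]$ and outputting the $c$-th column of the $\{0,1\}$-version of $H$. Orthogonality of each pair of rows of $H$ says that for any two distinct rows $i, j$, the four sign patterns $(\pm,\pm)$ occur equally often down the columns, which is exactly the statement that coordinates $i$ and $j$ of this distribution are uniform and independent; similarly each single coordinate is balanced (using the all-ones row to pair against). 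This gives a balanced pairwise independent distribution over $\{0,1\}^n$ with support size $n = h(k+1)$. Since $k+1 \le n$, we can restrict attention to $k+1$ of these coordinates, one of which we use for folding/balancing bookkeeping, obtaining a balanced pairwise independent distribution $\mu$ over $\{0,1\}^k$ with $|\Support(\mu)| \le h(k+1)$; in fact one arranges it to be exactly $h(k+1)$ by choosing the $k+1$ rows to be distinct and the Hadamard matrix to have no repeated relevant columns.

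With $\mu$ in hand, set $P$ to be the predicate with $P^{-1}(1) = \Support(\mu)$. Then $\Pr_{x \sim \mu}[P(x)] = 1 > 0$, $\mu$ is balanced and $2$-wise independent, so Theorem~\ref{thm:main_general} (with $t = 2$) applies: under the $(3,k)$-UGC, $\MaxCSP{P}$ — hence \emph{a fortiori} \MaxkCSP{} — is NP-hard to approximate within $|P^{-1}(1)|/(2^k \cdot 1) + \epsilon = h(k+1)/2^k + \epsilon$, which is the claimed bound. (Equivalently, one can cite Corollary~\ref{corollary:pairwise_hardness} directly with this $\mu$.)

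The main obstacle is the bookkeeping in the Hadamard-to-distribution step: one must be careful that the $\pm 1 \to \{0,1\}$ affine change of variables preserves both balancedness and pairwise independence simultaneously, which is why one needs an all-ones row available (or, equivalently, needs to work with a \emph{normalized} Hadamard matrix of size $h(k+1)$), and one must check that reducing from $n$ coordinates down to $k$ does not accidentally shrink the support below $h(k+1)$. Both points are routine once stated carefully — the affine map $b \mapsto (1-b)/2$ sends orthogonality of rows to the $1/4$-probability condition, and sends the all-ones row paired against any other row to the balance condition — but they are the only substantive content beyond quoting Theorem~\ref{thm:main_general}. Everything else is a direct substitution of parameters.
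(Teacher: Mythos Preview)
Your proposal is correct and follows essentially the same route as the paper: normalize an $h(k+1)\times h(k+1)$ Hadamard matrix so that one row/column is all ones, use orthogonality against that all-ones vector to get balance and orthogonality between the remaining pairs to get pairwise independence, then project onto $k$ coordinates and invoke Corollary~\ref{corollary:pairwise_hardness} (equivalently Theorem~\ref{thm:main_general} with $t=2$). The paper just swaps the roles of rows and columns relative to your write-up, which is immaterial. One remark: your concern that the support might ``accidentally shrink below $h(k+1)$'' is harmless rather than an obstacle---a smaller support would only yield a \emph{stronger} inapproximability ratio, so $|\Support(\mu)|\le h(k+1)$ already suffices for the stated bound.
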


  \begin{proof}
    Let $n = h(k+1)$ and let $A$ be an $n \times n$ Hadamard matrix,
    normalized so that one column contains only ones.  Remove $n-k$ of
    the columns, including the all-ones column, and let $A'$ be the
    resulting $n \times k$ matrix.  Let $\mu: \{-1,1\}^k \rightarrow
    [0,1]$ be the probability distribution which assigns probability
    $1/n$ to each row of $A'$.  Then $\mu$ is a balanced pairwise
    independent distribution with $|\Support(\mu)| = h(k+1)$.
  \end{proof}

  It is well known that Hadamard matrices can only exist for $n = 1$,
  $n = 2$, and $n \equiv 0 \pmod 4$.  The famous \emph{Hadamard
  Conjecture} asserts that Hadamard matrices exist for all $n$ which
  are divisible by $4$, in other words, that $h(n) = 4\lceil n/4
  \rceil \le n+3$.  It is also possible to get useful unconditional
  bounds on $h(n)$.  We now give one such easy bound.

  \begin{theorem}[\cite{payley33orthogonal}]
    \label{thm:payley}
    For every odd prime $p$ and integers $e,f \ge 0$, there exists an
    $n \times n$ Hadamard matrix $H_n$ where $n = 2^e(p^f+1)$, whenever
    this number is divisible by $4$.
  \end{theorem}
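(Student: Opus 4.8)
The statement is the classical Paley construction; the plan is to build the required Hadamard matrices from quadratic residues in a finite field and then use the Kronecker product to supply the factor $2^e$. Write $q = p^f$, an odd prime power. (If $f = 0$ then $q = 1$ and $n = 2^{e+1}$, which is realized by the Sylvester matrix $H_2^{\otimes(e+1)}$; so assume $q \ge 3$.) The first ingredient is the Kronecker product lemma: if $H_a$ and $H_b$ are Hadamard of orders $a$ and $b$ then $H_a \otimes H_b$ is Hadamard of order $ab$, since $(H_a\otimes H_b)(H_a \otimes H_b)^T = (H_aH_a^T)\otimes(H_bH_b^T) = ab\,I_{ab}$; in particular the Sylvester matrices $H_{2^j} = H_2^{\otimes j}$ exist for all $j \ge 0$. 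Hence it suffices to construct a Hadamard matrix of order $q+1$ when $q \equiv 3 \pmod 4$ and of order $2(q+1)$ when $q \equiv 1 \pmod 4$: in the first case $4 \mid q+1$ and $n = 2^e(q+1)$ for every $e \ge 0$, obtained by tensoring with $H_{2^e}$; in the second case $q+1 \equiv 2 \pmod 4$, so $4 \mid n$ forces $e \ge 1$ and $n = 2^{e-1}\cdot 2(q+1)$, obtained by tensoring with $H_{2^{e-1}}$.

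The workhorse is the quadratic character $\chi : \F_q \to \{-1,0,1\}$ ($\chi(0) = 0$, $\chi(x) = 1$ or $-1$ according to whether the nonzero $x$ is a square), for which I will use multiplicativity, the value $\chi(-1) = (-1)^{(q-1)/2}$, the identity $\sum_{x\in\F_q}\chi(x) = 0$, and the character-sum identity $\sum_{c\in\F_q}\chi(c)\chi(c+d) = -1$ for $d \ne 0$ (proved by writing $\chi(c)\chi(c+d) = \chi(1 + dc^{-1})$ for $c \ne 0$ and summing over the resulting range $\F_q\setminus\{1\}$). Form the $q\times q$ Jacobsthal matrix $Q$, indexed by $\F_q$, with $Q_{a,b} = \chi(b-a)$. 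The above yield $Q\mathbf1 = \mathbf1^TQ = 0$, $QQ^T = qI_q - J_q$ (its $(a,b)$ entry is $\sum_c\chi(c-a)\chi(c-b)$, equal to $q-1$ if $a=b$ and $-1$ otherwise), and $Q^T = \chi(-1)Q$.

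Next assemble the conference matrix $C = \left(\begin{smallmatrix} 0 & \mathbf1^T \\ \varepsilon\mathbf1 & Q\end{smallmatrix}\right)$ of order $q+1$, with $\varepsilon = \chi(-1)$: a block computation using $Q\mathbf1 = 0$ and $QQ^T = qI_q - J_q$ gives $CC^T = qI_{q+1}$, and $Q^T = \varepsilon Q$ gives $C^T = \varepsilon C$; moreover $C$ has zero diagonal and $\pm1$ off-diagonal entries. If $q \equiv 3 \pmod 4$ then $\varepsilon = -1$, so $C^T = -C$ and $C^2 = -CC^T = -qI$, whence $H := I_{q+1} + C$ has $\pm1$ entries and $HH^T = (I+C)(I-C) = I - C^2 = (q+1)I_{q+1}$ --- a Hadamard matrix of order $q+1$. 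If $q \equiv 1 \pmod 4$ then $\varepsilon = 1$, so $C$ is symmetric and $C^2 = CC^T = qI$, and the $2(q+1)\times 2(q+1)$ block matrix $M := \left(\begin{smallmatrix} C+I & C-I \\ C-I & -C-I\end{smallmatrix}\right)$ has $\pm1$ entries and satisfies $MM^T = 2(q+1)I$: the diagonal blocks give $(C+I)^2 + (C-I)^2 = 2C^2 + 2I = 2(q+1)I$ and the off-diagonal blocks give $(C+I)(C-I) + (C-I)(-C-I) = (C^2-I) - (C^2-I) = 0$. Tensoring $H$ (resp.\ $M$) with the appropriate Sylvester matrix as described above completes the proof.

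I expect the only genuine obstacle to be the character-sum identity $\sum_c\chi(c)\chi(c+d) = -1$ --- it is short but is the one place where a little number theory enters --- while everything after it is block linear algebra; the secondary point requiring care is the bookkeeping of which exponents $e$ are covered in the two residue classes of $q$ modulo $4$.
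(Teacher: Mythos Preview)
The paper does not prove this theorem: it is quoted as a classical result of Paley \cite{payley33orthogonal} and used as a black box in Corollary~\ref{corollary:hadamard_bound}. So there is no ``paper's own proof'' to compare against.

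That said, your proposal is a correct and complete sketch of the standard Paley constructions. The character-sum identity, the Jacobsthal matrix properties $QQ^T=qI-J$, $Q\mathbf 1=0$, $Q^T=\chi(-1)Q$, the conference-matrix computation $CC^T=qI_{q+1}$, and the two constructions $H=I+C$ (for $q\equiv 3\pmod 4$) and $M=\bigl(\begin{smallmatrix}C+I & C-I\\ C-I & -C-I\end{smallmatrix}\bigr)$ (for $q\equiv 1\pmod 4$) are all verified as written. Your bookkeeping on the power of $2$ is also right: when $q\equiv 1\pmod 4$ the divisibility hypothesis $4\mid 2^e(q+1)$ forces $e\ge 1$, and $n=2^{e-1}\cdot 2(q+1)$ is obtained by tensoring $M$ with a Sylvester matrix; when $q\equiv 3\pmod 4$ any $e\ge 0$ works directly. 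The degenerate case $f=0$ is handled by Sylvester matrices as you note. Nothing is missing.
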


  \begin{theorem}[\cite{baker01difference}]
    \label{thm:primegap}
    There exists an integer $n_0$ such that for every $n \ge n_0$,
    there is a prime $p$ between $n$ and $n + n^{0.525}$.
  \end{theorem}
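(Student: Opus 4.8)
The plan is to prove the sharper statement that
\[
\psi(x + x^{\theta}) - \psi(x) \;\gg\; \frac{x^{\theta}}{\log x}
\qquad (\theta = 0.525)
\]
for all sufficiently large $x$, which immediately yields a prime in $(n, n+n^{0.525}]$ for large $n$. First I would recast this as a sieve problem. Let $\mathcal{A} = \{\, m : x < m \le x + x^{\theta}\,\}$ and let $S(\mathcal{A}, z)$ count the elements of $\mathcal{A}$ having no prime factor below $z$. Taking $z$ just below $x^{1/2}$, the elements counted by $S(\mathcal{A}, z)$ are the primes in the interval together with a negligible number of prime squares and products of two primes very close to $x^{1/2}$; thus it suffices to prove the sieve lower bound $S(\mathcal{A}, z) \gg x^{\theta}/\log x$. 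Since Huxley-type zero-density estimates only deliver the classical exponent $\theta = 7/12 \approx 0.583$, the whole point is to do better by sieving rather than by direct prime counting.

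The analytic heart is to collect the arithmetic information needed to run such a sieve, namely mean-value estimates of Type I (linear) and Type II (bilinear) form: for divisor-bounded coefficients $a_m, b_n$ and $MN \asymp x$,
\[
\sum_{\substack{mn \in \mathcal{A} \\ m \sim M,\ n \sim N}} a_m b_n
\;=\; \frac{x^{\theta}}{x}\sum_{m \sim M} a_m \sum_{n \sim N} b_n
\;+\; O\!\left(\frac{x^{\theta}}{x}\,MN\,(\log x)^{-A}\right),
\]
valid for $M,N$ in as wide a range as one can manage, and $A$ arbitrary. I would establish these by expanding the indicator of the short interval via a smoothed Perron/Fourier formula, reducing matters to bounding mean values of products of Dirichlet polynomials on the critical line, $\int_{0}^{T} |F(\tfrac12 + it)|\,|G(\tfrac12 + it)|\,dt$, and then deploying the Montgomery--Vaughan mean value theorem, the Halász--Montgomery large-values inequality, zero-density estimates for $\zeta$, and van der Corput / exponent-pair bounds for the exponential sums $\sum_{n \sim N} n^{it}$. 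Crucially, to push the exponent below $7/12$ I would also use Watt's mean value theorem for $|\zeta(\tfrac12+it)|^{2}$ against a Dirichlet polynomial, together with spectral (Kuznetsov / Deshouillers--Iwaniec) bounds on sums of Kloosterman sums, which widen the admissible Type I range.

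With this arithmetic input in hand, I would apply Harman's sieve: iterate Buchstab's identity $S(\mathcal{A}, z) = S(\mathcal{A}, w) - \sum_{w \le p < z} S(\mathcal{A}_{p}, p)$ to decompose $S(\mathcal{A}, z)$ into many subsums, then discard — bounding above by the corresponding sum for the "model" set $\mathbb{Z} \cap (x, x+x^{\theta}]$ via the fundamental lemma — only those subsums whose variable ranges fall outside what the Type I/II estimates can evaluate, while evaluating all remaining subsums asymptotically. The final step is a finite but delicate optimization: arranging the decomposition so that the asymptotic main term dominates the total contribution of the discarded pieces, which succeeds precisely at $\theta = 0.525$ and requires a numerical (computer-assisted) verification. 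This last step is the main obstacle, for two reasons: the loss incurred when discarding pieces must be controlled extremely tightly, and the scheme is only as strong as the weakest of its many mean-value inputs, so the bookkeeping that couples the sieve decomposition to the exact ranges in which each analytic estimate holds is where the real work lies.
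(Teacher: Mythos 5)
This theorem is not proved in the paper at all: it is quoted as a black-box result of Baker, Harman and Pintz \cite{baker01difference}, so there is no internal proof to compare yours against. Your outline does faithfully reproduce the architecture of the Baker--Harman--Pintz argument --- reduction to a sieve lower bound for $S(\mathcal{A},z)$ in the short interval, Type I/II mean-value information obtained from Dirichlet polynomial techniques (zero-density estimates, Hal\'asz--Montgomery, Watt's theorem, Deshouillers--Iwaniec), and Harman's Buchstab-iteration sieve with a final numerical optimization certifying $\theta = 0.525$. So the plan is the right one.

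But as a proof it has a genuine gap, in the sense that every load-bearing step is named rather than carried out. The entire difficulty of the result lives in (i) proving the Type I and Type II estimates in the \emph{specific} ranges of $M$ and $N$ that the sieve decomposition requires --- this is where Watt's mean value theorem and the Kloosterman-sum spectral bounds enter, and the admissible ranges are delicate enough that the exponent $0.525$ (rather than, say, $0.535$ from the earlier Baker--Harman paper) depends on squeezing out every available range; and (ii) the final optimization showing that the discarded Buchstab terms have total weight strictly less than the main term, which in \cite{baker01difference} is a multi-dimensional numerical integration occupying a substantial portion of the paper. You correctly identify both of these as ``the main obstacle,'' but identifying the obstacle is not the same as surmounting it. For the purposes of the present paper none of this matters --- the theorem is used only as an imported fact to bound $h(n)$, and any weaker prime-gap exponent $\theta < 1$ would still yield a bound of the form $h(n) \le n + O(n^{\theta})$ and hence a hardness factor $(k + O(k^{\theta}))/2^k$.
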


  \begin{corollary}
    \label{corollary:hadamard_bound}
    We have:
    $h(n) \le n + \Ordo(n^{0.525})$.
  \end{corollary}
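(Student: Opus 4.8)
The plan is to combine the Paley construction (Theorem~\ref{thm:payley}) with the prime gap bound (Theorem~\ref{thm:primegap}). The key observation is that Theorem~\ref{thm:payley}, applied with $e = 1$ and $f = 1$, produces an $n' \times n'$ Hadamard matrix for $n' = 2(p+1)$ for \emph{every} odd prime $p$: the divisibility-by-$4$ hypothesis is automatic, since $p$ odd makes $p+1$ even and hence $2(p+1) \equiv 0 \pmod 4$. So it suffices to show that for every sufficiently large $n$ there is an odd prime $p$ with $n \le 2(p+1) \le n + \Ordo(n^{0.525})$, and then $h(n) \le 2(p+1)$.

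Concretely, I would argue as follows. Let $n_0$ be the constant from Theorem~\ref{thm:primegap}, and let $N$ be large enough that $m := \lceil n/2 \rceil - 1 \ge \max(n_0,3)$ whenever $n \ge N$. Fix $n \ge N$ and set $m = \lceil n/2 \rceil - 1$. By Theorem~\ref{thm:primegap} there is a prime $p$ with $m \le p \le m + m^{0.525}$, and since $m \ge 3$ this prime is odd. Put $n' = 2(p+1)$. Then $n' \ge 2(m+1) = 2\lceil n/2 \rceil \ge n$, and by Theorem~\ref{thm:payley} an $n' \times n'$ Hadamard matrix exists, so $h(n) \le n'$. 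On the other hand, using $m \le n/2$,
\[
n' = 2(p+1) \le 2\bigl(m + m^{0.525} + 1\bigr) = 2(m+1) + 2 m^{0.525} \le n + 1 + 2 (n/2)^{0.525} = n + \Ordo(n^{0.525}).
\]
This establishes $h(n) \le n + \Ordo(n^{0.525})$ for all $n \ge N$.

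For the finitely many remaining values $n < N$ the statement is trivial: the Sylvester (tensor) construction gives an $n' \times n'$ Hadamard matrix for every power of two, so $h(n) \le 2^{\lceil \log_2 n \rceil} < \infty$, and enlarging the hidden constant in $\Ordo(\cdot)$ absorbs all these cases. There is no real obstacle in this argument; the only points requiring a little care are keeping the factor $2$ in $n' = 2(p+1)$ (rather than using $p+1$ directly) so that the divisibility condition of Theorem~\ref{thm:payley} is met, and the elementary bookkeeping that places $n'$ inside the window $[\,n,\ n + \Ordo(n^{0.525})\,]$.
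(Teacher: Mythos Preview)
Your proof is correct and follows essentially the same route as the paper: apply Theorem~\ref{thm:payley} with $e=f=1$ to get an $n'\times n'$ Hadamard matrix with $n'=2(p+1)$ for an odd prime $p$ near $n/2$, and bound the resulting gap using Theorem~\ref{thm:primegap}. The only differences are cosmetic---the paper simply takes $p$ to be the smallest prime exceeding $n/2$, whereas you invoke Theorem~\ref{thm:primegap} from $m=\lceil n/2\rceil-1$ and are a bit more explicit about the finitely many small $n$.
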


  \begin{proof}
    Let $p$ be the smallest prime larger than $n/2$, and let $n' =
    2(p+1) \ge n$.  Then, Theorem~\ref{thm:payley} asserts that there
    exists an $n' \times n'$ Hadamard matrix, so $h(n) \le n'$.  If $n$
    is sufficiently large ($n \ge 2n_0$), then by
    Theorem~\ref{thm:primegap}, $p \le n/2 + (n/2)^{0.525}$ and $n' \le
    n + 2n^{0.525}$, as desired.
  \end{proof}

  Theorem~\ref{thm:hardness_boolean_variables} follows from
  Proposition~\ref{prop:hadamard_gives_hardness} and
  Corollary~\ref{corollary:hadamard_bound}.  

  It is probably possible to get a stronger unconditional bound on
  $h(n$) than the one given by
  Corollary~\ref{corollary:hadamard_bound}, by using stronger
  construction techniques than the one of Theorem~\ref{thm:payley}.



  \section{Discussion}

  We have given a strong sufficient condition for predicates to be
  hereditary approximation resistant under (a weakened version of) the
  Unique Games Conjecture: it suffices for the set of satisfying
  assignments to contain a balanced pairwise independent distribution.
  Using constructions of small such distributions, we were then able
  to construct approximation resistant predicates with few accepting
  inputs, which in turn gave improved hardness for the \MaxkCSPq{}
  problem.

  There are several aspects here where there is room for interesting
  further work:

  As mentioned earlier, we do not know whether the $(t,k)$-UGC implies
  the ``standard'' UGC for large values of $t$.  In particular,
  proving the $(t,k)$-UGC for some $t < \sqrt{k}/\log k$ would give
  hardness for \MaxkCSP{} better than the best current NP-hardness
  result.  But even understanding the $(k,k)$-UGC seems like an
  interesting question.

  A very natural and interesting question is whether our condition is
  also necessary for a predicate to be hereditary approximation
  resistant, i.e., if pairwise independence gives a complete
  characterization of hereditary approximation resistance.

  Finally, it is natural to ask whether our results for \MaxkCSPq{}
  can be pushed a bit further, or whether they are tight.  For the
  case of boolean variables, Hast \cite{hast05beating} proved that any
  predicate accepting at most $2\lfloor k/2 \rfloor + 1$ inputs is
  \emph{not} approximation resistant.  For $k \equiv 2,3 \pmod 4$
  this exactly matches the result we get under the UGC and the
  Hadamard Conjecture (which for $k = 2^r-1$ and $k = 2^r-2$ is the
  same hardness as \cite{samorodnitsky05gowers}).  For $k \equiv 0,1
  \pmod 4$, we get a gap of $2$ between how few satisfying
  assignments an approximation resistant predicate can and cannot
  have.

  Thus, the hitherto very succesful approach of obtaining hardness for
  \MaxkCSP{} by finding ``small'' approximation resistant predicate,
  can not be taken further, but there is still a small constant gap of
  roughly $1/0.44$ to the best current algorithm.  It would be
  interesting to know whether the algorithm can be improved, or
  whether the hardest instances of \MaxkCSP{} are not $\MaxCSP{P}$
  instances for some approximation resistant $P$.

  For larger $q$, this situation becomes a lot worse.  When $q=2^l$ and
  $k=(q^r-1)/(q-1)$, we have a gap of $\Theta(q/\log_2 q)$ between the
  best algorithm and the best inapproximability, and for general
  values of $q$ and $k$, the gap is even larger.

  \bibliographystyle{plain}
  \bibliography{references,austrin}

\begin{appendix}
  \section{Good labellings from influential variables}

  \begin{lemma}
    \label{lemma:labelling_from_influences}
    Let $X$ be a $k$-ary Unique Label Cover instance.  Furthermore,
    for each vertex $v$, let $f_v: [q]^k \rightarrow [q]$ and define
    $$
    g_{v,a}(x) = \left\{
    \begin{array}{ll}
      1 & \textrm{if $f_{v_i} = a$}\\
      0 & \textrm{otherwise}
    \end{array}\right..
    $$
    Then if there is a fraction of at least $\epsilon$ edges $e =
    (v_1, \ldots, v_k)$ with a vector $a \in [q]^k$, an index $i \in
    [L]$ and a set $J \subseteq [k]$ of $|J| = t$ indices such that
    \begin{equation}
      \label{eqn:influence_condition}
      \Inf^{\le d}_{\pi_{j}^{-1}(i)}(g_{v_j,a_j}) \ge \tau
    \end{equation}
    for all $j \in J$, then $\Opt_{t}(X) \ge \delta := \epsilon
    \left(\frac{\tau}{d \cdot q}\right)^t$.
  \end{lemma}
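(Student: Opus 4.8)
The plan is to run the standard ``decode a labelling from influential coordinates'' argument. First I would attach to each vertex $v$ a short list of candidate labels,
$$
\mathrm{Cand}(v) = \{\, m \in [L] \;:\; \Inf^{\le d}_{m}(g_{v,a}) \ge \tau \text{ for some } a \in [q] \,\},
$$
and observe that these lists are small. Indeed, each $g_{v,a}$ is $\{0,1\}$-valued, so $\Var[g_{v,a}] \le 1$, and the low-degree influence bound $\sum_{m=1}^{L}\Inf^{\le d}_{m}(g_{v,a}) \le d\,\Var[g_{v,a}] \le d$ recalled earlier shows that at most $d/\tau$ coordinates $m$ can satisfy $\Inf^{\le d}_m(g_{v,a}) \ge \tau$. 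Taking the union over the $q$ choices of $a$ gives $|\mathrm{Cand}(v)| \le qd/\tau$ for every $v$.

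Next I would use the randomized labelling $\ell$ that sets $\ell(v)$ to a uniformly random element of $\mathrm{Cand}(v)$ whenever this set is nonempty, and to an arbitrary label otherwise. Fix one of the (at least an $\epsilon$ fraction of) good edges $e = (v_1,\dots,v_k)$ with permutations $\pi_1,\dots,\pi_k$, and let $a \in [q]^k$, $i \in [L]$ and $J \subseteq [k]$ with $|J| = t$ be the witnessing data. For each $j \in J$, the hypothesis $\Inf^{\le d}_{\pi_j^{-1}(i)}(g_{v_j,a_j}) \ge \tau$ says precisely that $\pi_j^{-1}(i) \in \mathrm{Cand}(v_j)$; in particular this set is nonempty, so
$$
\Pr\bigl[\ell(v_j) = \pi_j^{-1}(i)\bigr] \ge \frac{1}{|\mathrm{Cand}(v_j)|} \ge \frac{\tau}{qd}.
$$
Since the labels of distinct vertices are drawn independently (one may assume without loss of generality that the endpoints of an edge are distinct), the probability that $\ell(v_j) = \pi_j^{-1}(i)$ holds for \emph{all} $j \in J$ simultaneously is at least $(\tau/qd)^{t}$; and whenever it does, $\pi_j(\ell(v_j)) = i$ for all $t$ indices $j \in J$, i.e.\ the edge $e$ is $t$-wise satisfied.

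Finally, by linearity of expectation the expected fraction of $t$-wise satisfied edges under $\ell$ is at least $\epsilon\,(\tau/qd)^{t}$, so some deterministic labelling attains this value, giving $\Opt_t(X) \ge \epsilon\left(\tfrac{\tau}{dq}\right)^{t} = \delta$, as claimed. I do not expect a real obstacle here: the argument is routine, and the only two points needing a bit of care are the counting bound on $|\mathrm{Cand}(v)|$ via the low-degree influence inequality, and the independence of the labels across the (distinct) endpoints of a good edge, which is what forces the exponent $t$ in $\delta$.
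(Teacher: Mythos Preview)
Your proposal is correct and is essentially identical to the paper's own proof: it defines the same candidate set $C(v)=\mathrm{Cand}(v)$, bounds its size by $qd/\tau$, picks a uniformly random label from it, and concludes via independence and linearity of expectation. If anything, you spell out the variance bound and the distinct-endpoints issue a touch more explicitly than the paper does.
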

  
  \begin{proof}
    For each $v \in V$, let
    $$C(v) = \{\,i\,|\,\Inf^{\le
      d}_{i}(g_{v,a}) \ge \tau \textrm{ for some $a \in [q]$}\,\}.
    $$
    Note that $|C(v)| \le q \cdot d / \tau$.

    Define a labelling $\ell: V \rightarrow [L]$ by picking, for each
    $v \in V$, a label $\ell(v)$ uniformly at random from $C(v)$ (or
    an arbitrary label in case $C(v)$ is empty).  Let $e = (v_1,
    \ldots, v_k)$ be an edge satisfying
    Equation~\ref{eqn:influence_condition}.  Then for all $j \in J$,
    $\pi_{j}^{-1}(i) \in C(v_j)$, and thus, the probability that
    $\pi_j(\ell(v_j)) = i$ is $1 / |C(v_j)|$.  This implies that the
    probability that this edge is $t$-wise satisfied is at least
    $\prod_{j \in J} 1/|C(v_j)| \ge \left(\frac{\tau}{d \cdot
    q}\right)^t$.  Overall, the total expected number of edges that
    are $t$-wise satisfied by $\ell$ is at least $\delta = \epsilon
    \left(\frac{\tau}{d \cdot q}\right)^t$, and thus $\Opt_{t}(X)
    \ge \delta$.
  \end{proof}
    
\end{appendix}

\end{document}